\title{Fully Dynamic Spectral Sparsification for Directed Hypergraphs}
\author[1]{Sebastian Forster}
\author[2]{Gramoz Goranci}
\author[3]{Ali Momeni}
\affil[1]{Department of Computer Science, University of Salzburg, Austria}
\affil[2]{Faculty of Computer Science, University of Vienna, Austria}
\affil[3]{Faculty of Computer Science, UniVie Doctoral School Computer Science DoCS, University of Vienna, Austria}
\date{}
\setlist[itemize]{itemsep=.5pt, topsep=3pt,leftmargin=*}
\setlist[enumerate]{itemsep=.5pt, topsep=3pt,leftmargin=*}
\titlespacing{\paragraph}{0pt}{1em}{1em} %
\definecolor{myOrange}{RGB}{230, 159, 0}
\definecolor{myLightBlue}{RGB}{86, 180, 233}
\definecolor{myGreen}{RGB}{0, 158, 115}
\definecolor{myYellow}{RGB}{240, 228, 66}
\definecolor{myDarkBlue}{RGB}{0, 114, 178}
\definecolor{myRed}{RGB}{213, 94, 0}
\definecolor{myPink}{RGB}{204, 121, 167}
\newtheorem{theorem}{Theorem}[section]
\newtheorem{lemma}[theorem]{Lemma}
\newtheorem{claim}{Claim}[subsection]
\newcommand{\vect}[1]{\ensuremath{\boldsymbol{#1}}}
\newcommand{\atmost}[1]{\ensuremath{O(#1)}\xspace}
\newcommand{\poly}[1]{\ensuremath{\operatorname{poly}\mleft( #1 \mright)}\xspace}
\def\Htil{\widetilde H}
\def\Otil{\widetilde O}
\def\Itil{\widetilde I}
\newcommand{\sparsifier}[1]{\ensuremath{\widetilde #1}\xspace}
\newcommand{\hypergraph}[1]{{\ensuremath{ #1}\xspace}}
\newcommand{\SpectralHypersparsifier}[1]{\ensuremath{ (1 \pm \varepsilon) }-spectral hypersparsifier\xspace}
\newcommand{\head}[1]{\ensuremath{\operatorname{h}\mleft( #1 \mright)}\xspace}
\newcommand{\tail}[1]{\ensuremath{\operatorname{t}\mleft( #1 \mright)}\xspace}
\newcommand{\E}[1]{\ensuremath{E\mleft( #1 \mright)}\xspace}
\newcommand{\C}[1]{\ensuremath{C_{#1}}\xspace}
\renewcommand{\S}[1]{\ensuremath{S_{#1}}\xspace}
\newcommand{\hh}[1]{\ensuremath{H_{#1}}\xspace}
\newcommand{\ConstantLambda}{\ensuremath{c_{ \lambda }}\xspace}
\newcommand{\ilast}{{\ensuremath{i_{\text{last} }}}\xspace}
\appto{\bibsetup}{\sloppy}
\crefname{subsection}{subsection}{subsections}
\crefname{problem}{problem}{problems}
\crefname{claim}{claim}{claims}
\crefname{algorithm}{algorithm}{algorithms}
\crefname{proof}{proof}{proofs}
\crefname{observation}{observation}{observations}
\crefname{invariant}{invariant}{invariants}
\newcommand{\arxivVsConference}[2]{#1}
\begin{document}
\maketitle
\pagenumbering{roman}

\begin{abstract}
There has been a surge of interest in spectral hypergraph sparsification, a natural generalization of spectral sparsification for graphs.
In this paper, we present a simple fully dynamic algorithm for maintaining spectral hypergraph sparsifiers of \textit{directed} hypergraphs.
Our algorithm achieves a near-optimal size of $O(n^2 / \varepsilon ^2 \log ^7 m)$ and amortized update time of $O(r^2 \log ^3 m)$, where $n$ is the number of vertices, and $m$ and $r$ respectively upper bound the number of hyperedges and the rank of the hypergraph at any time. 

 We also extend our approach to the parallel batch-dynamic setting, where  a batch of any $k$ hyperedge insertions or deletions can be processed with $O(kr^2 \log ^3 m)$ amortized work and $O(\log ^2 m)$ depth. This constitutes the first spectral-based sparsification algorithm in this setting.
\end{abstract}

\newpage
\setcounter{page}{1}
\pagenumbering{arabic}
\section{Introduction}

Sparsification--the process of approximating a graph with another that has fewer edges while preserving a key property--is a central paradigm in the design of efficient graph algorithms.
A fundamental property of interest is graph cuts, 
which are not only foundational in graph theory and serve as duals to flows, 
but also have widespread applications in areas such as graph clustering~\cite{Lee:2014aa,Peng:2017aa,Feng:2018aa} and image segmentation~\cite{Boykov:2006aa,Kohli:2010aa}, to mention a few. 
In their seminal work, \arxivVsConference{\citeauthor{Benczur:1996aa}}{Bencz{\'{u}}r and Karger}~\cite{Benczur:1996aa} initiated the study of sparsifiers in the context of graph cuts. 
They showed that any graph admits a sparse reweighted cut-sparsifier
whilst paying a small loss in the approximation. 
\arxivVsConference{\citeauthor{Spielman:2011aa}}{Spielman and Teng}~\cite{Spielman:2011aa} introduced a variant of sparsification known as spectral sparsification, 
which generalizes cut sparsification and measures graph similarity 
via the spectrum of their Laplacian matrices. 
The development of such sparsification techniques has had a profound impact across algorithm design~\cite{Madry:2010aa,Sherman:2013aa,Spielman:2014aa,Peng:2016aa}, 
with the Laplacian paradigm standing out as a key example~\cite{Boman:2008aa,Spielman:2008aa,Daitch:2008aa,Kelner:2009aa,Koutis:2011aa}.

Motivated by the need to capture complex interdependencies in real-world data,
beyond the pairwise relationships modeled by traditional graphs, 
there has been a surge of interest in recent years in developing spectral sparsifiers for hypergraphs. 
These efforts have led to algorithmic constructions 
that achieve near-optimal size guarantees~\cite{Kapralov:2021aa,Kapralov:2021ab,Oko:2023aa,Jambulapati:2023aa,Lee:2023aa}. 
However, a common limitation of these algorithms is the assumption that the input graph is static--an assumption that does not hold in many practical settings. 
For example, real-world graphs, such as those modeling social networks, are inherently dynamic and undergo continual structural changes. 
For undirected hypergraphs, this limitation has been addressed in two recent independent works~\cite{Goranci:2025aa,Khanna:2025aa}, 
which show that spectral hypergraph sparsifiers can be maintained dynamically, 
supporting both hyperedge insertions and deletions in polylogarithmic time with respect to input parameters. 
This naturally leads to the question of whether \emph{directed} hypergraphs also admit similarly efficient dynamic sparsification algorithms.

In this paper, we study problems at the intersection of dynamic graph-based data structures and spectral sparsification for directed hypergraphs. 
Specifically, we consider the setting where a directed hypergraph undergoes hyperedge insertions and deletions, 
and the goal is to efficiently process these updates while maintaining a spectral sparsifier that approximates the input hypergraph. 
Our work builds upon variants of two key algorithmic constructions: the static spectral sparsification framework for directed hypergraphs developed by \arxivVsConference{\citeauthor{Oko:2023aa}}{Oko et al.}~\cite{Oko:2023aa}, 
and the dynamic sparsifier maintenance techniques for ordinary graphs by \arxivVsConference{\citeauthor{Abraham:2016ab}}{Abraham et al.}~\cite{Abraham:2016ab}. 
Leveraging insights from both lines of work and modifying them to our setting, 
we design efficient dynamic algorithms for maintaining spectral sparsifiers of directed hypergraphs, 
as formalized in the theorem below.

\begin{restatable}{theorem}{main} \label{th:main}
    Given a directed hypergraph \(H = (V, E, \vect{w})\)
    with \(n\) vertices, rank \(r\), and at most \(m\) hyperedges (at any time),
    there is a fully dynamic data structure 
    that, with high probability,
    maintains a \SpectralHypersparsifier{} \(\widetilde H\) of \(H\)
    of size \atmost{n^2 / \varepsilon ^2 \log ^7 m}
    in \atmost{r^2 \log ^3 m}  
    amortized update time.
\end{restatable}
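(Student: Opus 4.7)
The plan is to combine the static spectral sparsification construction of Oko et al.~\cite{Oko:2023aa} for directed hypergraphs with the hierarchical sparsifier-maintenance scheme of Abraham et al.~\cite{Abraham:2016ab}, treating the former as a black-box subroutine that, given $k$ hyperedges of rank at most $r$, outputs a $(1\pm\varepsilon)$-spectral hypersparsifier of size $O(n^2/\varepsilon^2 \operatorname{polylog}(m))$ in $\widetilde{O}(r^2 k)$ time. The crucial structural fact I will use is the \emph{composition property}: since the directed hypergraph quadratic form is additive across hyperedges, a hyperedge-disjoint union of $(1\pm\varepsilon)$-spectral sparsifiers of sub-hypergraphs is itself a $(1\pm\varepsilon)$-spectral sparsifier of the union. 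This allows us to split the maintenance problem across a logarithmic hierarchy of buckets without degrading the approximation beyond a controllable factor.

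The data structure maintains a laminar decomposition of the current hyperedge set $E$ into $L=O(\log m)$ geometrically sized levels $E_0,\ldots,E_L$. For each level $\ell$, we keep a sparsifier $\widetilde{H}_\ell$ of the hyperedges at that level, produced by an invocation of the Oko et al.~procedure with accuracy parameter $\varepsilon' = \Theta(\varepsilon/L)$. The maintained output is the union $\widetilde{H} = \bigcup_\ell \widetilde{H}_\ell$; applying the composition property $L$ times shows that $\widetilde{H}$ is a $(1\pm\varepsilon)$-spectral hypersparsifier of $H$, and summing the per-level size bounds yields the claimed $O(n^2/\varepsilon^2 \log^7 m)$ overall size (with an extra $\log m$ absorbed by $\varepsilon'$).

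Updates are processed as in the Abraham et al.~scheme: an insertion is placed into $E_0$; a deletion is marked in the level containing the affected hyperedge, and we also store a pending modification at $E_0$ to track the change. Once a level $\ell$ accumulates a fraction of modifications comparable to its size, we rebuild: collect all hyperedges currently in $E_0 \cup \cdots \cup E_\ell$, discard those deleted in the meantime, push them to a higher level $\ell^\star$ chosen so the level's capacity accommodates them, and reinvoke the static sparsifier on that merged set. A standard potential-function / token-based amortization argument shows that each hyperedge, over its lifetime, participates in $O(\log m)$ rebuilds, each costing $\widetilde{O}(r^2)$ work per hyperedge, giving the $O(r^2 \log^3 m)$ amortized bound.

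The main obstacle I anticipate is reconciling deletions with the composition scheme while preserving the target amortized cost. Unlike cut/spectral sparsifiers on ordinary graphs, where deletion-correction via resampling is well understood, for a directed hypergraph we cannot simply ``remove a sampled copy'' of a deleted hyperedge from $\widetilde{H}_\ell$ without re-running sparsification, because the importance scores used by the static algorithm depend globally on the level's hyperedge set. The plan is therefore to insist on full-level rebuilds as the only correction mechanism and to choose the rebuild thresholds so that the invariant ``$\widetilde{H}_\ell$ is a sparsifier of the \emph{current} $E_\ell$'' holds at all times, while still amortizing the cost. Getting the thresholds, the dependence of $\varepsilon'$ on $L$, and the high-probability correctness (requiring a union bound over $\mathrm{poly}(m)$ invocations of the static subroutine, absorbed by calling it with appropriately boosted parameters) to line up with the stated bounds is where the bulk of the technical work lies.
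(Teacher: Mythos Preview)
Your high-level framework---decomposability plus a logarithmic hierarchy of buckets in the style of~\cite{Abraham:2016ab}---matches the paper's outer reduction, but the deletion mechanism you propose does not work, and this is precisely the step where the paper departs from a black-box use of~\cite{Oko:2023aa}. You plan to correct deletions only by full-level rebuilds, triggered once a level has accumulated a constant fraction of modifications, while simultaneously asserting the invariant ``$\widetilde H_\ell$ is a sparsifier of the \emph{current} $E_\ell$'' at all times. These two requirements are incompatible: between a deletion and the next rebuild, $\widetilde H_\ell$ is a sparsifier of the \emph{old} $E_\ell$, not the current one. A single deleted hyperedge $e$ can dominate $Q_{E_\ell}(\vect x)$ for suitable $\vect x$, so no slack argument or threshold choice restores the invariant; and you cannot cancel $e$ by inserting a negative copy, since the sparsifier must be a bona fide weighted hypergraph. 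If instead you rebuild on \emph{every} deletion, the amortization collapses.

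The paper does not treat the static construction as a black box. It opens it up and changes the recursion: instead of recursing on $C_i\cup S_i$ as in~\cite{Oko:2023aa}, it recurses only on the sampled part $S_i$ and sets $\widetilde H = C_1\cup\cdots\cup C_k\cup S_k$. The point of this modification is that every hyperedge now lies in at most one coreset, so deleting a hyperedge from $H$ causes at most one change per level, each repaired by a local swap (replace the deleted coreset member by the next-heaviest hyperedge in the relevant $E(u,v)$) rather than a rebuild. This yields a genuine \emph{decremental} data structure with $O(r^2\log^2 m)$ amortized update time and $O(n^2\varepsilon^{-2}\log^6 m)$ size (\Cref{lem:decremental-spectral-sparsify}); the outer hierarchy then handles \emph{insertions} by reinitialization and simply forwards each deletion to the appropriate decremental instance. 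A smaller point: decomposability over a \emph{partition} (\Cref{lem:decomposability}) is lossless, so at the outer level the paper uses $\varepsilon$, not $\varepsilon/L$; the $\varepsilon/(2k)$ scaling lives inside the modified static routine, where the $k=O(\log m)$ iterations compose multiplicatively. Your ``apply the composition property $L$ times'' and the resulting $\varepsilon'=\varepsilon/L$ are unnecessary for a flat partition.
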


The guarantees provided by the above theorem are nearly tight, for the following reasons: (1) even reading the hyperedges in a hypergraph of rank \(r\) requires $\Theta(r)$ time, 
which means any update time must inherently depend on $r$,
and (2) in the setting of directed graphs, 
it is folklore that a complete bipartite graph on $n$ vertices,
with all edges directed from one partition to the other, 
gives an $\Omega(n^2)$ lower bound on the size of any directed sparsifier. 
Moreover, \arxivVsConference{\citeauthor{Oko:2023aa}}{Oko et al.}~\cite{Oko:2023aa} established an even stronger lower bound, 
showing that any spectral sparsifier must also incur a $\Omega(\epsilon^{-1})$ dependence. 
The latter implies that the size of our dynamic sparsifier is optimal up to a factor of $\epsilon^{-1}$ and polylogarithmic terms.

Our algorithmic construction also extends naturally to the closely related batch-dynamic setting. 
In this model--similar to the fully dynamic setting--updates consist of hyperedge insertions and deletions, but are processed in batches, enabling the exploitation of parallelism. 
This approach is particularly well-suited for handling high-throughput update streams and may more accurately reflect 
how dynamic changes are managed in real-time systems. 
Our result in this model is formalized in the theorem below.

\begin{restatable}{theorem}{parallel} \label{th:parallel}
    Given a directed hypergraph \(H = (V, E, \vect{w})\)
    with \(n\) vertices, rank \(r\), and at most \(m\) hyperedges (at any time)
    undergoing batches of \(k\) hyperedge additions or deletions,
    there is a parallel fully dynamic data structure 
    that, with high probability,
    maintains a \SpectralHypersparsifier{} \(\widetilde H\) of \(H\)
    of size \atmost{n^2 / \varepsilon ^2 \log ^7 m}
    in \atmost{k r^2 \log ^3 m} amortized  work
    and \atmost{\log ^2 m} depth.
\end{restatable}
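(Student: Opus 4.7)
The plan is to obtain Theorem~\ref{th:parallel} by parallelizing the fully dynamic construction that establishes Theorem~\ref{th:main}. I would first recall the high-level structure of that construction: following the Abraham et al.~\cite{Abraham:2016ab} template adapted to directed hypergraphs via the Oko et al.~\cite{Oko:2023aa} importance-sampling scheme, the sparsifier is maintained through a hierarchy of $O(\log m)$ levels, each sampling hyperedges independently at a prescribed rate and triggering rebuilds (of bounded blocks or cells) whenever local counters cross a threshold. The parallel algorithm processes an incoming batch of $k$ updates by running the same logic, but executing the independent pieces in parallel at every stage.

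For the work bound, I would argue that each hyperedge update contributes the same $O(r^2 \log^3 m)$ amortized cost as in the sequential analysis, because the amortized arguments (based on counters, thresholds, and potential functions) depend only on the multiset of updates, not on their ordering. Thus the total work for a batch of $k$ updates is $O(k r^2 \log^3 m)$. For the data structures holding hyperedges and their sampling states, I would use parallel dictionaries supporting batched insertions and deletions in work proportional to the batch size and $O(\log m)$ depth (e.g., via parallel hashing or batched balanced BSTs), so that the sequential cost is preserved up to constants.

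For the depth bound, I would show that processing a single level takes $O(\log m)$ depth: the independent per-hyperedge sampling decisions can be made in parallel, the updates to counters can be aggregated with parallel prefix sums and reductions in $O(\log m)$ depth, and the corresponding batched insertions/deletions into the sparsifier and auxiliary structures take $O(\log m)$ depth. Since the hierarchy has $O(\log m)$ levels and changes at level $i$ must be resolved before level $i+1$ is updated, composing the per-level depth across the hierarchy yields the desired $O(\log^2 m)$ total depth.

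The main obstacle will be the inter-level dependency: a batch of $k$ hyperedge updates may, through cascading rebuilds, trigger a large number of induced updates at higher levels, and these cannot be processed until the lower level stabilizes. I would handle this by showing that at each level the induced updates form another well-defined \emph{batch} whose size is controlled by the same amortized counter/potential argument as in the sequential setting, so that I can apply the per-level parallel primitives once per level and pay only the $O(\log m)$ factor from the hierarchy depth. Verifying that the randomized sampling decisions remain valid (independent across hyperedges and across levels) under batched execution, and that the high-probability concentration arguments used in the static sparsifier of~\cite{Oko:2023aa} go through unchanged, completes the proof.
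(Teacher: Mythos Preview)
Your high-level strategy---parallelize each level, argue $O(\log m)$ depth per level, then multiply by the $O(\log m)$ levels of the hierarchy---matches the paper's. But you have mischaracterized the sequential algorithm you are parallelizing, and this mischaracterization is precisely where the depth and work bounds need care.

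The algorithm is not an importance-sampling scheme governed by ``counters, thresholds, and potential functions.'' At each level $i$ one builds a $\lambda$-\emph{coreset} $C_i$ of $S_{i-1}$ (for every ordered pair $(u,v)$, take the $\lambda$ heaviest hyperedges with $u$ in the tail and $v$ in the head), then samples $S_{i-1}\setminus C_i$ uniformly with probability $1/2$ to get $S_i$. The crucial design choice (recurse on $S_i$, not on $C_i\cup S_i$) gives a \emph{structural} guarantee: a single hyperedge deletion at level $i$ causes \emph{at most one} hyperedge change at level $i{+}1$. Hence a batch of $k$ deletions stays a batch of size at most $k$ at every level---there is no cascade to control by an amortized argument, and no ``large number of induced updates'' to worry about. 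Your plan to bound the induced batch size via a potential/counter argument is therefore aimed at a non-issue, while the actual work you need to do---parallelizing the coreset maintenance---is absent from the proposal.

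Concretely, the paper's per-level parallelization hinges on the sorted lists $E(u,v)$: building them in parallel, sorting them with parallel \textsc{MergeSort} (this is where one $O(\log^2 m)$ depth term comes from, in initialization), and, on a batch deletion, first removing all $k$ hyperedges from the relevant $E(\cdot,\cdot)$ lists \emph{before} picking replacements, so that replacement choices are independent and do not collide with deleted edges. The fully dynamic layer is the standard doubling partition $H_1,\dots,H_{\log m}$ with a binary counter $t$; for a batch insertion you increment $t$ by the batch size and reinitialize the appropriate $H_j$. None of this appears in your plan, and without it the $O(\log m)$-per-level depth claim is unsubstantiated.
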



\arxivVsConference{}{
The data structure of \Cref{th:parallel},
along with its analysis,
is explained in \arxivVsConference{\Cref{sec:parallel}}{Appendix~\ref{sec:parallel}}.
}

\subsection{Related Work}
We briefly discuss the related work for spectral sparsification on both graphs and hypergraphs below.

\paragraph{Static Algorithms.}
Starting with \citeauthor{Spielman:2011aa}~\cite{Spielman:2011aa}, spectral sparsification has been extensively studed on graphs~\cite{Batson:2012aa,Kapralov:2012aa,Batson:2013aa,Zhu:2015aa,Koutis:2016ab,Lee:2017aa,Li:2018aa}.
Recently, the concept has been extended to undirected and directed hypergraphs~\cite{Soma:2019aa,Bansal:2019aa,Kapralov:2021aa,Kapralov:2021ab,Rafiey:2022aa,Oko:2023aa,Lee:2023aa,Jambulapati:2023aa,Khanna:2024aa}.

\paragraph{Dynamic Algorithms.}
For undirected graphs, there have been several results for dynamic spectral sparsifiers that use a similar approach to ours.
This includes the work of \citeauthor{Abraham:2016ab}~\cite{Abraham:2016ab}, which achieves polylogarithmic update time using $t$-bundle spanners, and its extension against an adaptive adversary by \citeauthor{Bernstein:2022aa}~\cite{Bernstein:2022aa} and to directed graphs by \citeauthor{Zhao:2025aa}~\cite{Zhao:2025aa}.
More recently, \citeauthor{Khanna:2025aa}~\cite{Khanna:2025aa} achieved a fully dynamic spectral sparsifier with near-optimal size and update time for undirected hypergraphs.
A closely related notion of sparsification, namely vertex sparsifiers, has also been studied in the dynamic setting~\cite{Goranci:2018aa,Durfee:2019aa,Chen:2020ac,Gao:2021aa,Axiotis:2021aa,Brand:2022aa,Dong:2022aa}.

\paragraph{Distributed and Parallel Algorithms.}
\citeauthor{Koutis:2016aa}~\cite{Koutis:2016aa} achieved simple algorithms for spectral graph sparsification that can be implemented in many computational models, including both parallel and distributed settings, with sub-optimal guarantees on the sparsifier size.
Very recently, \citeauthor{Ghaffari:2025aa}~\cite{Ghaffari:2025aa} developed a parallel batch-dynamic algorithm for spanners.
Other related works include distributed vertex sparsifiers~\cite{Zhu:2021aa,Forster:2021aa}.

\paragraph{Online and Streaming Algorithms.}
For graphs, \citeauthor{Kelner:2013aa}~\cite{Kelner:2013aa} extended the sampling scheme based on effective resistances~\cite{Spielman:2008aa} to the semi-streaming setting.
\citeauthor{Cohen:2020aa}~\cite{Cohen:2020aa} obtained an online spectral sparsification algorithm for graphs.
Recently, \citeauthor{Soma:2024aa}~\cite{Soma:2024aa} proposed an online algorithm for spectral hypergraph sparsification.
For graphs in dynamic streams, \citeauthor{Ahn:2013aa}~\cite{Ahn:2013aa} developed a spectral sparsifier.
Also, there has been a series of work on spectral sparsification in dynamic streams for both graphs and hypergraphs~\cite{Kapralov:2017aa,Kapralov:2019aa,Kapralov:2020aa}.

\subsection{Technical Overview}
In this section, we present the main ideas behind our fully dynamic algorithm for maintaining a $(1\pm \varepsilon)$-spectral hypersparsifier $\Htil$ of a directed hypergraph $H$.
Our algorithm builds on the static algorithm of \cite{Oko:2023aa}, which we briefly review.

The algorithm of \cite{Oko:2023aa} constructs $\Htil$ by computing a sequence of hypergraphs \(H_1, \dots, H_k\), where \(H_1\) is a spectral hypersparsifier of \(H\), \(H_2\) is a spectral hypersparsifier of \(H_1\), and so on, until \(H_k = \widetilde H\).
Each $H_i$ is obtained via simple sampling scheme (discussed shortly) that guarantees, with high probability, $H_i$ is proportionally smaller than $H_{i-1}$.
As a result, the number of iterations is bounded by $k= O(\log m)$.
In constructing \(H_{i}\) from \(H_{i-1}\), the algorithm obtains two sub-hypergraphs of \(H_{i-1}\): the {\it coreset} hypergraph $C_i$ and the {\it sampled} hypergraph $S_i$.

At a high level, $C_i$ consists of a sufficient number of heavy-weight hyperedges of $H_{i-1}$ (to be specified shortly).
This ensures that, when hyperedges are sampled uniformly at random from the remaining hypergraph $H_{i-1}\setminus C_i$ to construct $S_i$, the resulting union $H_i := C_i\cup S_i$ forms a spectral hypersparsifier of $H_{i-1}$.
More precisely, $C_i$ is constructed as follows.
For every pair $(u,v)\in V\times V$, the algorithm selects the $O(\log^3 m/\varepsilon^2)$ heaviest hyperedges whose tail\footnote{In a directed hypergraph, each hyperedge $e$ is a pair $(\tail{e}, \head{e})$, with the tail $\tail{e}\subseteq V$ and the head $\head{e}\subseteq V$ reflecting the direction of $e$. See \Cref{sec:preliminaries} for further details.} contains $u$ and whose head contains $v$.\footnote{If multiple hyperedges have equal weight, the algorithm picks them arbitrarily.}
These hyperedges are then added to $C_i$.

To construct $\S{i}$, each hyperedge in \(H_{i-1} \setminus \C{i-1}\) is sampled independently with probability \(1/2\), and its weight is doubled.
They prove that, with high probability, this simple sampling scheme produces a $(1\pm \varepsilon)$-spectral hypersparsifier \(H_i = \C{i} \cup \S{i}\) of \(H_{i-1}\).
Moreover, with high probability, \(|E(H_i)| \leq 3|E(H_{i-1})|/4\) and so \(k = \atmost{\log m}\).
See \Cref{fig:OST-algorithm} for an illustration of their approach.

\begin{figure}[t]
\centering
\setkeys{Gin}{width=\linewidth}

\begin{subfigure}{0.4\textwidth}
\includegraphics[page=1]{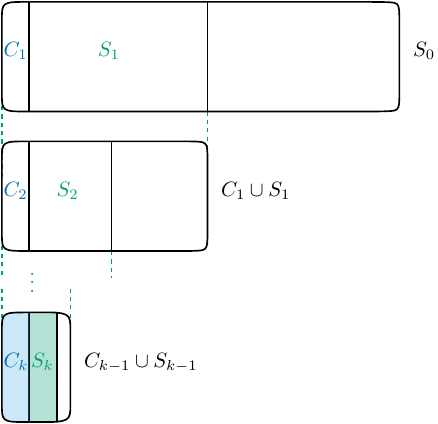}
\caption{}
\label{fig:OST-algorithm}
\end{subfigure}
\hfil
\begin{subfigure}{0.4\textwidth}
\includegraphics[page=2]{hypergraph.pdf}
\caption{}
\label{fig:our-algorithm}
\end{subfigure}
\caption{
Comparison of (a) the algorithm of \cite{Oko:2023aa} and (b) our static algorithm.
In each iteration \(i\), their algorithm recurses on \(\hh{i-1} = \C{i-1} \cup \S{i-1}\) and computes \(\hh{i} = \C{i} \cup \S{i}\) for the next iteration.
In contrast, our algorithm recurses solely on \S{i-1}, adds the coreset \C{i} to the sparsifier \sparsifier{H}, and computes the sampled hypergraph \S{i} for the next iteration.
After \(k= O(\log m)\) iterations, our algorithm terminates and returns \(\sparsifier{H} = \C{1} \cup \dots \C{k} \cup S_k\), whereas the algorithm of \cite{Oko:2023aa} returns \(\sparsifier{H} = \C{k} \cup \S{k}\) (the shaded parts in the figures).
The increase in the size of \sparsifier{H} in our algorithm allows us to maintain \sparsifier{H} efficiently in the dynamic setting, as detailed in \Cref{subsec:dynamic}.
}
\label{fig:algorithm}
\end{figure}

Unfortunately, the static algorithm cannot be directly converted into an efficient dynamic algorithm.
This is mostly due to the way the hypergraphs in the sequence \(H_1, \dots, H_k\) are built on top of each other, as a single change in $H$ can propagate into $O(m)$ changes across the sequence.
As an example, we explain how the removal of a hyperedge $e$ from $H_i$ can cause at least two changes in $H_{i+1}$, which can eventually lead to $\atmost{2^k}= \atmost{m}$ changes throughout the sequence.
Assume a hyperedge $e$ is removed from $H$ and that $e$ also belongs to the coreset hypergraph $C_i$ of $H_{i-1}$.
As $H_{i+1}$ is a sub-hypergraph of $H_i= C_i\cup S_i$, $e$ may also belong to $H_{i+1}$, necessitating its removal from $H_{i+1}$ as well.
To maintain $C_i$ as a coreset of $H_{i-1}$, the algorithm must replace $e$ with a next heaviest hyperedge $e'$ from $H_{i-1}\setminus C_i$.
If $e'$ is an unsampled hyperedge (i.e., belongs to $H_{i-1}\setminus S_i$), its addition to $C_i$ (and so $H_i$) may require updating $H_{i+1}$ as well: as $H_{i+1}$ is a sub-hypergraph of $H_i$, the (newly added) hyperedge $e'$ could be heavier than some hyperedge $e''$ in $C_{i+1}$, necessitating the replacement of $e''$ in $C_{i+1}$.
Thus, the removal of a hyperedge $e$ from $H_i$ can trigger at least two removals (namely, $e$ and $e''$) and one insertion (namely, $e'$) in $H_{i+1}$.
See \Cref{subsec:static} for further details.

To overcome this issue, our static algorithm deviates from that of \cite{Oko:2023aa} in the following way.
We ensure that each hyperedge is included in at most one coreset by recursing on the sampled hypergraph \S{i} rather than on $\C{i} \cup \S{i}$.
At the same time, we add \C{i} to the sparsifier $\Htil$.
i.e., we set $\Htil= C_1\cup \dots\cup C_k\cup S_k$, which can be verified to remain a spectral sparsifier of $H$ (as discussed in \Cref{lem:spectral-sparsify}).
See \Cref{fig:our-algorithm} for an illustration.
Using this approach, after the deletion of $e$ from $H_i$, each hypergraph $H_{i+1},\dots, H_k$ in the sequence will undergo at most one change.
Specifically, in the case of replacing a hyperedge $e'$ by $e$ as in the high-recourse example above, there are two possibilities: either (1) $e'$ does not belong to $H_{i+1}$ (i.e., it is not in the sampled hypergraph $S_i$), in which case the replacement does not affect $H_{i+1}$, or (2) $e'$ belongs to $H_{i+1}$, in which case the update is interpreted as the removal of $e'$ from $H_{i+1}$ (note that since $C_i$ is excluded from $H_{i+1}$, the hyperedge $e$ no longer belongs to $H_{i+1},\dots, H_k$).
The downside of this approach is that it increases the size of \(\widetilde H\) from \atmost{n^2/\varepsilon^2 \log ^3 (n/\varepsilon)} to \atmost{n^2/\varepsilon^2 \poly{\log m}}.
This increase becomes noticeable only when $m$ is exponential in $n$.
Even in that case, however, the size of $\Htil$ stays $\poly{n}$, which is asymptotically much smaller than the exponential size of $H$.

To dynamize our static algorithm, we adapt an approach similar to that of \cite{Abraham:2016ab} for graphs.
We first design a decremental data structure (\Cref{alg:decremental-spectral-sparsify} in \Cref{subsub:decremental}), where the updates consist of only hyperedge deletions, and then use it to design a fully dynamic data structure (\Cref{alg:reduction} in \Cref{subsub:fully-dynamic}) via a reduction technique.

Our decremental algorithm leverages the fact that a hyperedge deletion in $H$ causes at most one hyperedge deletion in each hypergraph in the sequence $H_1,\dots, H_k$.
The removal of a hyperedge $e$ from $H_i= C_i\cup S_i$ is handled using a straightforward replacement scheme.
If $e$ belongs to the sampled hypergraph $S_i$, then $S_i$ remains {\it valid} after the removal of $e$, in the sense that it is still a set of hyperedges sampled uniformly at random from the updated $H_i\setminus C_i$.
The update procedure is more involved if $e$ belongs to the coreset hypergraph $C_i$.
Recall that $e$ was added to $C_i$ through a pair $(u,v)\in V\times V$, where $u$ belongs to the tail of $e$ and $v$ belongs to the head of $e$.
The replacement of $e$ is handled by removing it from all sets representing such pairs and then selecting a hyperedge with high weight that is not already in $C_i$.
Since there are $O(r^2)$ such pairs, this results in an $\Otil(r^2)$ update time for maintaining $H_i$.
See \Cref{subsub:decremental} for more details.

To convert our decremental data structure to a fully dynamic one, we leverage the decomposability property of spectral sparsifiers: the union of spectral sparsifiers of hyperedge partitions of $H$ forms a spectral sparsifier of $H$ (see \Cref{lem:decomposability} for a formal statement).
The main idea is to maintain a hyperedge partition \(I_1, \dots, I_l\) of \(H\) where \(|E(I_i)| \leq 2^i\) at any time.
Deletions are handled by passing them to the respective \(I_i\), whereas insertions are more difficult to handle: the data structure finds an integer \(j\) and moves all the hyperedges in \(I_1,\dots,I_{j-1}\) to \(I_j\) along with the inserted hyperedge.
The choice of $j$ depends on the number of insertions so far, with smaller values of $j$ (corresponding to hypergraphs considerably smaller than $H$) being chosen more frequently.
On each $I_i$, we run our decremental data structure to maintain a spectral sparsifier $\Itil_i$ of $I_i$, and upon an insertion, we reinitialize it for $I_j$.
The algorithm sets $\Htil= \Itil_1\cup \dots\cup \Itil_l$, and since $k= O(\log m)$ the desired size of $\Htil$ follows.
See \Cref{subsub:fully-dynamic} for further discussion.

Our approach in designing a decremental data structure and extending it to a fully dynamic one is similar to recent work on dynamic sparsification for undirected hypergraphs \cite{Goranci:2025aa,Khanna:2025aa}.
Moreover, \cite{Goranci:2025aa} also builds on the framework of \cite{Oko:2023aa}.
The key difference is that we adapt the directed framework of \cite{Oko:2023aa} (specifically, $\lambda$-coresets), whereas \cite{Goranci:2025aa} employs their undirected framework (specifically, $t$-bundle hyperspanners, a concept also used in \cite{Khanna:2025aa}).
Both \cite{Goranci:2025aa} and \cite{Khanna:2025aa} rely on spanner-based techniques to bound the effective resistances in the {\it associated graph} of the hypergraph as a crucial step in enabling their simple sampling scheme.
For the directed case, however, \cite{Oko:2023aa} shows that this translates to using coresets, which are structurally simpler than hyperspanners.
This structural simplicity allows us to design relatively simpler algorithms compared to those in \cite{Goranci:2025aa,Khanna:2025aa}.

It is noteworthy to mention the recent work of \cite{Khanna:2024aa} that reduces directed hypergraph sparsification to undirected hypergraph sparsification.
Combining this reduction with the fully dynamic undirected hypergraph sparsification result of \cite{Khanna:2025aa} yields a dynamic algorithm for directed hypergraph sparsification with guarantees similar to ours. 
However, the algorithm of \cite{Khanna:2025aa} is substantially more involved: (i) it relies on vertex-sampling steps, which our approach does not require, and (ii) it uses dynamic graph spanner constructions in a black-box manner. 
Moreover, it does not seem straightforward to extend their algorithm to the batch-parallel setting.
In comparison, our coreset-based construction is significantly simpler and potentially more practically relevant. 
It readily extends to the batch-parallel setting and achieves a better and explicit polylogarithmic overhead in both sparsifier size and update time.

Lastly, in \Cref{sec:parallel}, we show how to parallelize our data structures when \(H\) undergoes batches of \(k\) hyperedge deletions or additions as a single update.
This discussion also explains how to adapt our fully dynamic data structure to support batch updates rather than single updates.

\section{Preliminaries} \label{sec:preliminaries}

\paragraph*{Hypergraphs.}
A hypergraph $H = (V, E, \vect{w})$ consists of a set of vertices $V$, a set of hyperedges \(E\), and a weight vector $\vect{w}\in \mathbb R^{|E|}_+$.
The direction of a hyperedge $e \in E$ is defined by sets $\tail{e}$ and $\head{e}$ as follows.
Each hyperedge $e \in E$ is a pair $(\tail{e}, \head{e})$, where both $\tail{e}$ and $\head{e}$ are non-empty subsets of $V$.
The sets $\tail{e}$ and $\head{e}$ are called the {\it tail} and {\it head} of $e$, respectively; they indicate the direction of $e$.
Note that $\tail{e}$ and $\head{e}$ may overlap.

We use \(E(H)\) to denote the set \(E\) of hyperedges of \(H\)
whenever necessary, to avoid possible confusion.
We define \(n = |V|\) and \(m = |E|\) 
and call \(H\) an \(m\)-edge \(n\)-vertex hypergraph.
We say \(H\) is of rank \(r\)
if, for every hyperedge \(e \in E\),
\(|\tail{e} \cup \head{e}| \leq r\).

Given a vector \(\vect{x} \in \mathbb R^n\) defined on the set of vertices \(V\),
we define \(x_v\) 
to be the value of vector \vect{x}
at the element associated with vertex \(v \in V\).
Similarly,
for a vector \(\vect{w} \in \mathbb R^m _+\) 
defined on the set of hyperedges \(E\),
we define \(w_e\) 
to be the value of vector \vect{w}
at the element associated with hyperedge \(e \in E\). 

\paragraph*{Spectral Sparsification of Directed Hypergraphs.}
We define the spectral property of a directed hypergraph \(H = (V, E, \vect{w})\)
using the energy function defined in the following.
For a vector \(\vect{x} \in \mathbb R^n\),
we define the energy of \vect{x} with respect to \(H\) as
\begin{equation*}
    Q_{H}(\vect{x}) = \sum _{e \in E} w_e \max_{u \in \tail{e}, v \in \head{e}} \left( x_u - x_v \right)_+^2,
\end{equation*}
where \(( x_u - x_v )_+ = \max\{x_u - x_v, 0 \}\) and \((x_u - x_v)_+^2 = \left( (x_u - x_v)_+ \right)^2\).
Note that this definition generalizes a similar definition for graphs, given by $Q_G(\vect{x})= \sum _{uv \in E} w_{uv} ( x_u - x_v )^2$, which represents the total energy dissipated in $G$ when viewed as an electrical network.
In such electrical network, the endpoints of each edge $uv$ have potentials $x_u$ and $x_v$, respectively, and the edge itself has resistance $1/w_{uv}$.
This interpretation is closely related to the notion of electrical flows, a concept that is deeply intertwined with spectral analysis.

The central object of this paper 
is the notion of a spectral hypersparsifier.
A hypergraph \(\widetilde H = (V, \widetilde E, \vect{\widetilde w})\) is called a \SpectralHypersparsifier{} of \(H\)
if for every vector \(\vect{x} \in \mathbb R^n\),
\begin{equation*}
    (1 - \varepsilon) Q_{\widetilde H}(\vect{x}) 
    \leq  
    Q_{H}(\vect{x}) 
    \leq 
    (1 + \varepsilon) Q_{\widetilde H}(\vect{x}).
\end{equation*}

The following lemma will be useful later in proving the guarantees of our algorithm.

\begin{lemma}[Decomposability] \label{lem:decomposability}
    Let \(H_1, \dots, H_k\) partition the hyperedges of a hypergraph \(H\).
    For each \(1 \leq i \leq k\),
    let \(\widetilde H_i\) be a \((1 \pm \varepsilon)\)-spectral hypersparsifiers of \(H_i\).
    Then, 
    the union \(\bigcup _{l = 1} ^k \widetilde H_l\) is a \SpectralHypersparsifier{} of \(H\).
\end{lemma}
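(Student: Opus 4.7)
The plan is to exploit the fact that the energy function $Q_H(\vect{x})$ is defined as a sum over hyperedges, so it distributes additively over any partition of $E(H)$. Concretely, I would first observe that for any $\vect{x} \in \mathbb{R}^n$,
\begin{equation*}
    Q_H(\vect{x}) = \sum_{e \in E(H)} w_e \max_{u \in \tail{e}, v \in \head{e}} (x_u - x_v)_+^2 = \sum_{l=1}^{k} \sum_{e \in E(H_l)} w_e \max_{u \in \tail{e}, v \in \head{e}} (x_u - x_v)_+^2 = \sum_{l=1}^{k} Q_{H_l}(\vect{x}),
\end{equation*}
where the middle equality uses that $\{E(H_l)\}_{l=1}^{k}$ partitions $E(H)$. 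The same additive identity applied to the union $\widetilde H := \bigcup_{l=1}^{k} \widetilde H_l$ (whose hyperedge set is the disjoint union of the $E(\widetilde H_l)$, since each $\widetilde H_l$ is a reweighted subset of $H_l$ and the $H_l$'s are disjoint) gives $Q_{\widetilde H}(\vect{x}) = \sum_{l=1}^{k} Q_{\widetilde H_l}(\vect{x})$.

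Next I would invoke the spectral sparsifier guarantee for each $\widetilde H_l$ pointwise: for every fixed $\vect{x}$ and every $l$,
\begin{equation*}
    (1-\varepsilon)\, Q_{\widetilde H_l}(\vect{x}) \leq Q_{H_l}(\vect{x}) \leq (1+\varepsilon)\, Q_{\widetilde H_l}(\vect{x}).
\end{equation*}
Summing these inequalities over $l = 1, \dots, k$ and substituting the two additive identities above yields
\begin{equation*}
    (1-\varepsilon)\, Q_{\widetilde H}(\vect{x}) \leq Q_H(\vect{x}) \leq (1+\varepsilon)\, Q_{\widetilde H}(\vect{x}),
\end{equation*}
which is exactly the spectral hypersparsifier condition for $\widetilde H$ with respect to $H$. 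Since $\vect{x}$ was arbitrary, this completes the argument.

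There is no real obstacle here; the only mild subtlety worth pointing out explicitly in the write-up is that the $\max$ over $\tail{e} \times \head{e}$ in the definition of $Q_H$ is taken \emph{inside} the sum over $e$, so additivity over a hyperedge-partition goes through cleanly even though $Q_H$ is not a quadratic form in $\vect{x}$. If the $\widetilde H_l$ are allowed to share hyperedges (e.g., a hyperedge appearing in two $\widetilde H_l$'s with different weights), I would take their union to mean summing the weights, which preserves the additivity of $Q$ and keeps the argument intact.
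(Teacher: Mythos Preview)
Your proposal is correct and follows essentially the same argument as the paper: use the per-part sparsifier inequalities $(1-\varepsilon)Q_{\widetilde H_i}(\vect{x}) \le Q_{H_i}(\vect{x}) \le (1+\varepsilon)Q_{\widetilde H_i}(\vect{x})$, sum over $i$, and invoke additivity of $Q$ over the hyperedge partition. Your write-up is in fact slightly more explicit than the paper's about why $Q_H(\vect{x}) = \sum_l Q_{H_l}(\vect{x})$ and $Q_{\widetilde H}(\vect{x}) = \sum_l Q_{\widetilde H_l}(\vect{x})$ hold, which is a nice touch.
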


\begin{proof}
    By definition,
    for every vector \(\vect{x} \in \mathbb R^n\),
    we have
    \begin{equation*}
        (1 - \varepsilon) Q_{\widetilde H_i}(\vect{x})
        \leq
        Q_{H_i}(\vect{x})
        \leq
        (1 + \varepsilon) Q_{\widetilde H_i}(\vect{x}).
    \end{equation*}
    Summing over all \(1 \leq i \leq k\),
    results in
    \begin{equation*}
        (1 - \varepsilon) \sum_{i = 1} ^k Q_{\widetilde H_i}(\vect{x})
        \leq
        \sum_{i = 1} ^k Q_{H_i}(\vect{x})
        \leq
        (1 + \varepsilon) \sum_{i = 1} ^k Q_{\widetilde H_i}(\vect{x}),
    \end{equation*}
    which means
    \begin{equation*}
        (1 - \varepsilon) Q_{\widetilde H}(\vect{x})
        \leq
        Q_{H}(\vect{x})
        \leq
        (1 + \varepsilon) Q_{\widetilde H}(\vect{x})
    \end{equation*}
    as \(H_1, \dots, H_k\) partition \(H\).
\end{proof}

\paragraph*{Chernoff Bound \cite{Chernoff:1952aa,Mitzenmacher:2017aa}.}
Let \(X_1, \dots, X_k\) be independent random variables, 
where each \(X_i\) equals \(1\) with probability \(p_i\), and \(0\) otherwise.
Let \(X = \sum_{i = 1} ^k X_i\)
and \(\mu = \mathbb E[X] = \sum_{i = 1} ^k p_i\).
Then, for all \(\delta \geq 0\),
\begin{equation} \label{eq:chernoff}
    \mathbb P \left[ X \geq (1 + \delta) \mu \right] \leq \exp\left( - \frac{\delta ^2 \mu}{2 + \delta}  \right).
\end{equation}

\paragraph*{Parallel Batch-Dynamic Model.}
We use the work-depth model
to analyze our parallel algorithm.
Work is defined as the total number of operations done by the algorithm,
and depth is the length of the longest chain of dependencies.
Intuitively,
work measures the time required for the algorithm to run on a single processor,
whereas depth measures the optimal time
assuming the algorithm has access to an unlimited number of processors.

In the batch-dynamic setting,
each update consists of a batch of \(k\) insertions or deletions,
and the goal is to take advantage of performing several hyperedge insertions or deletions as a single update
to improve the work and depth of the parallel algorithm.
Note that this model is equivalent to the one with mixed updates (i.e., when the batch consists of both insertions and deletions),
as this can be transformed into two steps, each consisting of insertions and deletions separately, without asymptotically increasing the work or depth.

\section{Dynamic Spectral Sparsification} \label{sec:algorithm}

In this section, we present our fully dynamic algorithm of \Cref{th:main} for maintaining a \SpectralHypersparsifier{} of a directed hypergraph \(H = (V, E, \vect{w})\).
To achieve this goal, we use a static algorithm as the cornerstone for our dynamic algorithm.
The static algorithm is explained in \Cref{subsec:static} and is followed by the dynamic data structure in \Cref{subsec:dynamic}.

\subsection{The Static Algorithm} \label{subsec:static}
We start by briefly explaining the algorithm of \cite{Oko:2023aa}, which serves as a foundation for our algorithm.
Their algorithm constructs a $(1\pm \varepsilon)$-spectral hypersparsifier $\Htil$ of $H$ using an iterative approach: starting with \(\hh{0} = H\), each iteration \(i\) computes a sub-hypergraph \hh{i} of \hh{i-1}, until the final iteration computes $H_\ilast$, where $\Htil = H_\ilast$.

To compute \hh{i} from \hh{i-1}, the algorithm first obtains a \textit{$\lambda$-coreset} $C_i$ of \hh{i-1}.
Roughly speaking, $C_i$ is a sub-hypergraph containing `heavyweight' hyperedges of \hh{i-1} and is defined as follows.
The algorithm defines an ordering on hyperedges in $H$ by their weights in decreasing order.
Note that this ordering naturally extends to every hypergraph $H_i$ as a sub-hypergraph of $H$.
To construct $C_i$, the algorithm examines every pair $(u,v)\in V\times V$ and selects the first $\lambda$ hyperedges in the ordering (if any) that are not already in $C_i$, whose tail contains $u$ and whose head contains $v$.
These hyperedges are then added to $C_i$.
The second building-block of \hh{i-1} is the sampled hypergraph \S{i} of $H_{i-1}$ defined as follows.
The algorithm samples the non-heavy hyperedges (i.e., the ones in $\hh{i-1}\setminus \C{i}$) with probability \(1/2\) and adds them to \S{i} while doubling their weight.
Consequently, $\hh{i} = \C{i} \cup \S{i}$.
See \Cref{alg:coreset-and-sample} for a pseudocode.

\begin{algorithm}[t]
\KwIn{an \(m\)-edge hypergraph \(H = (V, E, \vect{w})\) and \( 0 < \varepsilon < 1 \)}
\KwOut{a coreset \C{} of \(H\) and a sampled hypergraph \S{}}
\( \lambda \gets \ConstantLambda \log ^3 m / \varepsilon ^2 \)
\tcc*{\ConstantLambda is a sufficiently large constant}
\( \C{}, \S{} \gets (V, \emptyset, \vect{0}) \)\;
Order the hyperedges of $H$ by their weights in decreasing order 

\ForEach(\tcc*[f]{compute \(\C{}\)}){pair \( (u, v) \in V \times V \)}{
    Construct the set \(\E{u, v} \subseteq E\) such that \(e \in \E{u, v}\) iff \(u \in \tail{e}\) and \(v \in \head{e}\)\;
    Add the first $\lambda$ hyperedges (if any) in the ordering from $\E{u, v} \setminus \C{}$ to \C{} while retaining their weights
}

\ForEach(\tcc*[f]{compute \(\S{}\)}){hyperedge \(e\) of \(H \setminus \C{}\)}{
    With probability \( 1/2 \), add \( e \) to \(S\) and double its weight
}

\Return{\( (\C{}, \S{}) \)}
\caption{Coreset-And-Sample(\( H, \varepsilon \))}
\label{alg:coreset-and-sample}
\end{algorithm}

Having heavyweight hyperedges in \C{i} ensures that the simple sampling scheme used for constructing $S_i$ results in $H_i= C_i\cup S_i$, which, with high probability, is a $(1\pm \varepsilon)$-spectral hypersparsifier of \hh{i-1} \cite[Lemma 4.3]{Oko:2023aa}.
Due to the sampling scheme, $H_i$  is roughly half the size of \(\hh{i-1}\), which ensures the termination of the algorithm after \(\ilast = \atmost{\log m}\) iterations.
Using this sequence $H_1,\dots, H_\ilast$ of hypergraphs,
they achieve a sparsifier $\Htil$ with an almost optimal size of \atmost{n^2/\varepsilon^2 \log ^3 (n / \varepsilon)} \cite[Theorem 1.1]{Oko:2023aa}.
See \Cref{fig:OST-algorithm} for an illustration.

Unfortunately, employing the sequence of hypergraphs \(\hh{1}, \dots, \hh{\ilast}\) can result in $O(m)$ recourse (and consequently, update time) in the dynamic setting.
For example, consider the removal of a hyperedge \(e\) from $H$ that also belongs to the coreset \C{i} of \hh{i-1}.
In this scenario, to ensure \C{i} remains a $\lambda$-coreset and so $\hh{i}=\C{i} \cup \S{i}$ remains a $(1 \pm \varepsilon_i)$-spectral hypersparsifier of \hh{i-1}, we need to replace \(e\) with another hyperedge \(e'\) from $\hh{i-1} \setminus \C{i}$.
i.e., if $e$ was added to $C_i$ through the pair $(u,v)$, hyperedge $e'$ in $\hh{i-1} \setminus \C{i}$ is the next hyperedge in the ordering whose tail contains $u$ and whose head contains $v$.
Since $S_i$ is a set of sampled hyperedges uniformly at random, it may be the case that $e'$ does not belong to $S_i$, and therefore to none of $H_j$ for $j>i$.
Since $e'$ is added to $H_i$ after the update (as it now belongs to $C_i$), it must be taken into account in the update of $H_{i+1}$ as a sub-hypergraph of (newly updated) $H_i$.
But now, $e'$ may be heavier than another hyperedge $e''$ in $C_{i+i}$, which necessitates the replacement of $e''$ with $e'$.
Since $e$ can be present in $C_{i+1}$ as well, this means that the deletion of $e$ from $H_i$ can result in at least $2$ changes in $H_{i+1}$, and at least \(2^{k-i} = \atmost{m}\) changes in the sequence, which is too expensive to afford.

To alleviate this issue, our static algorithm deviates from that of \cite{Oko:2023aa} by recursing on \S{i} instead of $\C{i}\cup \S{i}$ as follows.
At each iteration $i$, the algorithm recurses on \S{i-1}, adds the coreset \C{i} of \S{i-1} to \sparsifier{H}, and samples the rest, $\S{i-1}\setminus \C{i}$, to obtain a smaller hypergraph \S{i} for the next iteration.
More precisely, the algorithm starts with $\S{0} = H$ and an empty sparsifier $\sparsifier{H}$.
In the first iteration, it computes the coreset \C{1} of \S{0}, and adds it to \sparsifier{H}.
The algorithm then samples the remaining hypergraph $\S{0} \setminus \C{1}$ to compute \S{1} by sampling every hyperedge in $\S{0} \setminus \C{1}$ with probability $1/2$ and doubling their weight.
Similar to \cite{Oko:2023aa}, with high probability, $\C{i}\cup \S{i}$ is a $(1 \pm \varepsilon_i)$-spectral hypersparsifier of \S{i-1} (\Cref{lem:coreset-and-sample}).
The algorithm then recurses on \S{1}, and so on.
Thus, our algorithm adds the sequence of coresets $\C{1}, \dots, \C{\ilast}$ to \sparsifier{H} while recursing on the sequence of hypergraphs $\S{1}, \dots, \S{\ilast}$.
In the last iteration \ilast, our algorithm adds \C{\ilast} as well as \S{\ilast} to \sparsifier{H}.
See \Cref{alg:spectral-sparsify} for a pseudocode and \Cref{fig:our-algorithm} for an illustration.

This technique, which is similar to the one used in \cite{Abraham:2016ab} for graphs, ensures that each hyperedge of \sparsifier{H} is associated with at most \textit{one} coreset.
This guarantees \atmost{\log m} hyperedge deletions from \(\S{1}, \dots, \S{\ilast}\) after a hyperedge deletion in \(H\) (\Cref{lem:decremental-spectral-sparsify}), but in return, results in a \poly{\log m} overhead in the size of \sparsifier{H} as we include the coresets \(\C{1}, \dots, \C{\ilast}\) in \sparsifier{H}.
i.e., our algorithm ensures that \sparsifier{H} is a desired sparsifier of size \atmost{n^2/\varepsilon^2 \poly{\log m}}, which is asymptotically much smaller than $H$ even when $m$ is exponential in $n$.

\begin{algorithm}[t]
\KwIn{hypergraph \hypergraph{H = (V, E, \vect{w})} and \( 0 < \varepsilon < 1 \)}
\KwOut{a \SpectralHypersparsifier{} \sparsifier{H} of \(H\)}
\( i \gets 0 \)\;
\( k \gets \lceil \log _{3/4} m \rceil \)\;
\(m^\star \gets n^2/\varepsilon^2 \log ^3m\)\;
\(\sparsifier{H} \gets (V, \emptyset, \vect{0})\)\;
\( \S{0} \gets H \)\;

\While(\tcc*[f]{\(c\) is from \Cref{lem:coreset-and-sample}}){\( i \leq k \) and \( |E(H_i)| \geq 32 c m^\star \)}{
    \( (\C{i+1}, \S{i+1}) \gets \textsc{Coreset-And-Sample(\( \S{i}, \varepsilon /(2k) \))} \)\;
    \(\sparsifier{H} \gets \sparsifier{H} \cup \C{i+1}\)\;
    \( i \gets i + 1 \)
}

\( \ilast \gets i \)\;
\(\sparsifier{H} \gets \sparsifier{H} \cup \S{\ilast}\)\;

\Return{\( \widetilde H \)}
\caption{Spectral-Sparsify(\(H, \varepsilon\))}
\label{alg:spectral-sparsify}
\end{algorithm}

The rest of this section examines the correctness of \Cref{alg:coreset-and-sample,alg:spectral-sparsify}.
\Cref{alg:coreset-and-sample}, used as a subroutine of \Cref{alg:spectral-sparsify}, computes a coreset and a sampled hypergraph of the input hypergraph.
The guarantees of \Cref{alg:coreset-and-sample} are stated in the following lemma.

\begin{lemma} \label{lem:coreset-and-sample}
    Let \( 0 < \varepsilon < 1 \) and let \(H = (V, E, \vect{w})\) be an \( m \)-edge \( n \)-vertex hypergraph.
    For any positive constant \(c \geq 1\), if \(m \geq c \log m\), then \Cref{alg:coreset-and-sample} returns a coreset \C{} and a sampled hypergraph \S{} such that \(\C{} \cup \S{}\) is a \SpectralHypersparsifier{} of \(H\) of size \atmost{m/2 + (2 c m \log m)^{1/2} + \lambda n^2} with probability at least \( 1 - 1/m^{c} \). 
\end{lemma}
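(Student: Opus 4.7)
The statement decomposes cleanly into a size bound for $\C{}\cup\S{}$ (deterministic for $\C{}$, concentration-based for $\S{}$) and a spectral guarantee, which I plan to inherit from the static coreset-sampling analysis of \cite{Oko:2023aa}.

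For the size: the outer loop in \Cref{alg:coreset-and-sample} contributes at most $\lambda$ hyperedges per pair $(u,v)\in V\times V$, so $|E(\C{})|\leq \lambda n^2$ deterministically. For $\S{}$, let $m' := |E(H\setminus \C{})|$ and $\mu := \mathbb{E}[|E(\S{})|] = m'/2 \leq m/2$. If $m' \leq \sqrt{2cm\log m}$, then $|E(\S{})| \leq m'$ already lies within the claimed bound deterministically. Otherwise, writing $t := \sqrt{2cm\log m}$ and $\delta := t/\mu$, the Chernoff bound \eqref{eq:chernoff} gives
\begin{equation*}
\frac{\delta^2 \mu}{2+\delta} \;=\; \frac{t^2}{2\mu + t} \;\geq\; c\log m,
\end{equation*}
where the last step uses $2\mu + t = O(m)$, which is justified by $t \leq O(m)$ under the hypothesis $m\geq c\log m$ (absorbing constants into \ConstantLambda{} and $c$ if needed). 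Hence $|E(\S{})| \leq \mu + t \leq m/2 + \sqrt{2cm\log m}$ except with probability at most $1/m^c$, which combined with the deterministic bound on $|E(\C{})|$ yields the claimed size.

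For the spectral guarantee: by construction, $\C{}$ is precisely a $\lambda$-coreset of $H$ in the sense of \cite{Oko:2023aa} -- for every pair $(u,v)\in V\times V$ it contains the $\lambda$ heaviest hyperedges of $\E{u,v}$ (or all of them, if fewer than $\lambda$ exist) -- and $\S{}$ is the uniform-$1/2$ sample of $H\setminus\C{}$ with weight doubling. I then invoke Lemma~4.3 of \cite{Oko:2023aa} as a black box: it certifies that $\C{}\cup\S{}$ is a $(1\pm\varepsilon)$-spectral hypersparsifier of $H$, with failure probability at most $1/m^c$ provided the hidden constant \ConstantLambda{} in $\lambda = \ConstantLambda\log^3 m/\varepsilon^2$ is taken sufficiently large. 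A union bound over this event and the size event concludes the proof.

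The only genuinely delicate part is matching constants so that both failure events fit inside a single $1/m^c$ tail, and ensuring that the Chernoff argument is robust across the full range of $\mu$ -- in particular the tiny-$\mu$ regime, which I dispose of by the deterministic case split above. I do not expect any deeper technical obstacle beyond this bookkeeping.
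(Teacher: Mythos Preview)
Your proposal is correct and follows essentially the same route as the paper: both defer the spectral guarantee to Lemma~4.3 of \cite{Oko:2023aa} as a black box and establish the size bound via a Chernoff tail on $|E(\S{})|$ together with the deterministic bound $|E(\C{})|\leq \lambda n^2$. Your explicit case split on small $m'$ and the final union bound are slightly more careful than the paper's write-up (which applies Chernoff with $\mu\leq m/2$ directly and leaves the combination of the two events implicit), but the substance is identical.
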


\begin{proof}
The lemma is derived from \cite[Lemma 4.3]{Oko:2023aa}. 
The only difference is that we guarantee a probability of success of at least \(1 - 1/\poly{m}\), instead of \(1 - 1/\poly{n}\).
Thus, we only prove the claim about the probability using an argument similar to that in \cite[Lemma 4.4]{Oko:2023aa}.

\underline{High probability claim:}
For every hyperedge \(e\) in \(\hh{} \setminus \C{}\), we define the random variable \(X_e\) to be equal \(1\) if \(e\) is sampled to be in \S{}, and \(0\) otherwise.
Let \(X = \sum_{\text{\(e\) in \S{}}} X_e\).
Since each hyperedge in \S{} is sampled independently with probability \(1/2\),  we can use \Cref{eq:chernoff} and we have \(\mu = \mathbb E\left[ X \right] = 1/2 |\hh{} \setminus \C{} | \leq m/2\).
By substituting \(\delta = \left( 2c \log m / m \right)^{1/2} \leq 2\) in \Cref{eq:chernoff},
\begin{equation*}
    \mathbb P \left[ X \geq \left( 1 +   \left( \frac{8c \log m}{m} \right)^{1/2} \right) \frac{m}{2} \right] \leq \exp\left( -\frac{1}{4} \left( \frac{8c \log m}{m} \right) \frac{m}{2} \right) = \exp\left( -c \log m \right) =  \frac{1}{m^c},
\end{equation*}
or equivalently
\begin{equation} \label{eq:coreset-probability}
    \mathbb P \left[ X \leq \frac{m}{2} + \left( 2c m \log m \right)^{1/2}  \right] \geq 1-\frac{1}{m^c}.
\end{equation}

Since each pair \((u, v) \in V \times V\) adds at most \(\lambda\) hyperedges to \(S\), we have \(|S| = \atmost{\lambda n^2}\). 
Together with \Cref{eq:coreset-probability}, it follows that 
\(\widetilde H = \C{} \cup \S{}\) has size \atmost{m/2 + (2 c m \log m)^{1/2} + \lambda n^2} with probability at least \( 1 - 1/m^{c} \).
\end{proof}

The guarantees of \Cref{alg:spectral-sparsify} are stated in the following lemma.

\begin{lemma} \label{lem:spectral-sparsify}
Let \( 0 < \varepsilon < 1 \) and let \(H = (V, E, \vect{w})\) be an \( m \)-edge \( n \)-vertex hypergraph.
Then, with high probability, \Cref{alg:spectral-sparsify} returns a \SpectralHypersparsifier{} \( \widetilde H \) of \( H \) of size \atmost{n^2 / \varepsilon ^2 \log^6 m }.
\end{lemma}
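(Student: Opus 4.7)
The plan is to establish the spectral guarantee and the size bound separately, both by (reverse) induction on the layer of the recursion. For each $0 \leq i \leq \ilast$, I would define the suffix $\widetilde{S}_i := \C{i+1} \cup \dots \cup \C{\ilast} \cup \S{\ilast}$, so that $\widetilde{S}_0 = \sparsifier{H}$ and $\widetilde{S}_{\ilast} = \S{\ilast}$. The invariant to maintain is that $\widetilde{S}_i$ is a $(1 \pm \varepsilon_i)$-spectral hypersparsifier of $\S{i}$, where $1 + \varepsilon_i = \prod_{j=i+1}^{\ilast}(1 + \varepsilon/(2k))$ and analogously for $1 - \varepsilon_i$. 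The base case $\varepsilon_{\ilast} = 0$ is immediate. For the inductive step, conditioning on the high-probability event of \Cref{lem:coreset-and-sample} yields that $\C{i+1} \cup \S{i+1}$ is a $(1 \pm \varepsilon/(2k))$-spectral hypersparsifier of $\S{i}$; meanwhile, because $\S{i+1}$ is drawn from $\S{i} \setminus \C{i+1}$, the sets $\C{i+1}$ and $\S{i+1}$ form a hyperedge partition of $\C{i+1} \cup \S{i+1}$. Applying \Cref{lem:decomposability} to this partition---with $\C{i+1}$ trivially sparsifying itself and $\widetilde{S}_{i+1}$ being a $(1 \pm \varepsilon_{i+1})$-sparsifier of $\S{i+1}$ by the inductive hypothesis---gives that $\widetilde{S}_i = \C{i+1} \cup \widetilde{S}_{i+1}$ is a $(1 \pm \varepsilon_{i+1})$-sparsifier of $\C{i+1} \cup \S{i+1}$. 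Chaining the two multiplicative approximations closes the induction. Unrolling and using $(1 + \varepsilon/(2k))^{k} \leq e^{\varepsilon/2} \leq 1 + \varepsilon$ (valid for $\varepsilon \leq 1$) and Bernoulli's inequality for the lower factor gives $\varepsilon_0 \leq \varepsilon$, so $\sparsifier{H}$ is indeed a $(1 \pm \varepsilon)$-spectral hypersparsifier of $H = \S{0}$. A union bound over the $\ilast = O(\log m)$ invocations of \Cref{alg:coreset-and-sample} boosts the per-step success probability $1 - 1/m^c$ into the desired high-probability guarantee.

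For the size bound, I would first confirm $\ilast = O(\log m)$. While the loop condition is active, $|E(\S{i})| \geq 32 c m^\star$ with $m^\star = n^2 / \varepsilon^2 \log^3 m$; the size estimate in \Cref{lem:coreset-and-sample} gives $|E(\S{i+1})| \leq |E(\C{i+1} \cup \S{i+1})| \leq |E(\S{i})|/2 + O(\sqrt{|E(\S{i})| \log m}) + \lambda n^2$, and the threshold $32 c m^\star$ is chosen precisely so that the first term dominates, so $|E(\S{i})|$ shrinks by at least a factor of $3/4$ each iteration, matching $k = \lceil \log_{3/4} m \rceil$. Each coreset satisfies $|E(\C{i})| \leq \lambda n^2$, and because \Cref{alg:spectral-sparsify} invokes \Cref{alg:coreset-and-sample} with error parameter $\varepsilon/(2k)$, one has $\lambda = \ConstantLambda \log^3 m / (\varepsilon/(2k))^2 = O(\log^5 m / \varepsilon^2)$ given $k = O(\log m)$. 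Summing $|E(\C{i})| = O(n^2 \log^5 m / \varepsilon^2)$ over the $O(\log m)$ coresets yields the target $O(n^2 \log^6 m / \varepsilon^2)$, while the final $\S{\ilast}$ contributes at most $32 c m^\star = O(n^2 \log^3 m / \varepsilon^2)$, which is dominated.

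The main obstacle I anticipate is pinning down the correct inductive invariant on the suffixes $\widetilde{S}_i$ and invoking \Cref{lem:decomposability} carefully---\Cref{lem:coreset-and-sample} naturally sparsifies $\S{i}$, not $\S{i+1}$, so the decomposability step must isolate the partition $\{\C{i+1}, \S{i+1}\}$ of $\C{i+1} \cup \S{i+1}$ before substituting $\widetilde{S}_{i+1}$ for $\S{i+1}$ in the right-hand factor. Once this structural observation is in place, the compounding of the per-iteration errors, the union bound, and the size accounting all reduce to routine algebra.
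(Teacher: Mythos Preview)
Your approach is essentially the paper's, with the induction run in the opposite direction: the paper proves by forward induction that the prefix $\bigcup_{j=1}^{l} \C{j} \cup \S{l}$ is a $(1 \pm \varepsilon/(2k))^{l}$-sparsifier of $H$, whereas you prove by reverse induction that the suffix $\C{i+1} \cup \dots \cup \C{\ilast} \cup \S{\ilast}$ sparsifies $\S{i}$. Both directions use \Cref{lem:coreset-and-sample} for the per-step guarantee and \Cref{lem:decomposability} to swap a sampled layer for its sparsifier; the error chaining, the final $(1 + \varepsilon/(2k))^{k} \leq 1 + \varepsilon$ bound, the union bound over $O(\log m)$ iterations, and the coreset-size accounting are all the same.

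There is, however, one step in your size argument that does not go through as written. You bound $|E(\S{i+1})| \leq |E(\C{i+1} \cup \S{i+1})| \leq |E(\S{i})|/2 + O(\sqrt{|E(\S{i})| \log m}) + \lambda n^{2}$ and then claim the threshold $32 c m^\star$ makes the first term dominate. But the recursive call uses error parameter $\varepsilon/(2k)$, so the coreset parameter is $\lambda = \Theta(k^{2} \log^{3} m / \varepsilon^{2}) = \Theta(\log^{5} m / \varepsilon^{2})$, and hence $\lambda n^{2} = \Theta(n^{2} \log^{5} m / \varepsilon^{2})$ already exceeds $32 c m^\star = \Theta(n^{2} \log^{3} m / \varepsilon^{2})$ by a $\log^{2} m$ factor; the $3/4$-shrinkage does not follow from your inequality. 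The fix is immediate and is exactly what the paper does: bound $|E(\S{i+1})|$ directly rather than via $|E(\C{i+1} \cup \S{i+1})|$, using the Chernoff estimate (\Cref{eq:coreset-probability}) from the proof of \Cref{lem:coreset-and-sample}, which concerns $\S{i+1}$ alone and therefore carries no $\lambda n^{2}$ term. With that correction your argument is complete.
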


\begin{proof}
We prove each guarantee separately below.
    
\underline{Approximation guarantee:}
We prove that 
\( H_l' = \bigcup_{j = 1} ^l C{j} \cup \S{l} \) is a \(\left( 1 \pm \varepsilon/(2k) \right)^l\)-spectral hypersparsifier of \(H\) by induction on \(l\).

If \(l = 1\), 
by \Cref{lem:coreset-and-sample}, 
\(\hh{1} = \C{1} \cup \S{1}\) is a \((1 \pm \varepsilon/(2k))\)-spectral hypersparsifier of \(H\).

Suppose that, for an integer \(l > 1\),
\(H_l'\) be a \(\left( 1 \pm \varepsilon/(2k) \right)^l\)-spectral hypersparsifier of \(H\).
To compute \(H_{l+1}'\), the algorithm %
computes a \((1 + \varepsilon/(2k))\)-spectral hypersparsifier \(\widetilde S_l = \C{l+1} \cup \S{l+1}\) of \S{l}.
Since \(\bigcup_{j=1} ^l \C{j} \cup \S{l}\) partition \(H_l'\), by \Cref{lem:decomposability}, \(H_{l+1}' = \bigcup_{j=1} ^{l} \C{j} \cup (\C{l+1} \cup \S{l+1})\) is a \(\left( 1 \pm \varepsilon/(2k) \right)\)-spectral hypersparsifier of \(H_l'\).
For every vector \(\vect{x} \in \mathbb R^n\), we have
\begin{equation*}
    Q_{H_{l+1}'}(\vect{x}) 
    \leq 
    ( 1 + \varepsilon/(2k) ) Q_{H_{l}'}(\vect{x}) 
    \leq
    ( 1 + \varepsilon/(2k) ) 
    \left( 1 +  \varepsilon/(2k)  \right)^l Q_{H}(\vect{x})
    =
    \left( 1 +  \varepsilon/(2k)  \right)^{l+1} Q_{H}(\vect{x}).
\end{equation*}
Similarly, \(\left( 1 -  \varepsilon/(2k)  \right)^{l+1} Q_{H}(\vect{x}) \leq Q_{H_{l+1}'}(\vect{x})\), and so
\(H_{l+1}'\) is a \(\left( 1 \pm \varepsilon/(2k) \right)^{l+1}\)-spectral hypersparsifier of \(H\).

The desired bound follows from the fact that \(\ilast \leq k\), and
\begin{equation*}
    \left( 1 + \varepsilon/(2k) \right)^{k} \leq (1 + \varepsilon) \quad \text{and} \quad (1 - \varepsilon) \leq \left( 1 - \varepsilon/(2k) \right)^{k}.
\end{equation*}
\underline{Size of \(\widetilde H\):}
Let \(m_i\) be the number of hyperedges present in \S{i}, where \(1 \leq i \leq \ilast\).
We first show that \(m_{i} \leq 3 m_{i-1}/4\).
By \Cref{eq:coreset-probability}, \S{i} has size \atmost{ m_{i-1}/2 + \left( 2 c m_{i-1} \log m_{i-1} \right)^{1/2}}, so it suffices to show that  \(\left( 2 c m_{i-1} \log m_{i-1} \right)^{1/2} \leq m_{i-1}/4\).
By the assumption of the loop, we have
\begin{equation*}
    m_{i-1} \geq 32 c m^\star = 32 c n^2/\varepsilon^2 \log ^3m \geq 32 c \log m_{i-1},
\end{equation*}
and thus \(\left( 2 c m_{i-1} \log m_{i-1} \right)^{1/2} \leq m_{i-1}/4\).
This means that  the size of \S{\ilast} is 
\begin{equation*}
   \atmost{\max \{ \log m, m^\star \} } = \atmost{n^2/\varepsilon^2 \log ^3m}.
\end{equation*}

The rest of \(\widetilde H\) consists of the coresets \(\C{1}, \dots, \C{\ilast}\).
By \Cref{lem:coreset-and-sample}, each \C{i} has size \(\atmost{\lambda _i n^2} = n^2 / (\varepsilon / 2k)^2 \log^3 m\). 
Since \(k = \atmost{\log m}\), %
the total size of the coresets is
\begin{equation*}
    \atmost{\sum _{l = 1} ^k k^2 n^2 / \varepsilon^2 \log^3 m} = \atmost{n^2 / \varepsilon^2 \log^6 m}.
\end{equation*}

Therefore, the total size of \(\widetilde H\) is
\begin{equation*}
    \atmost{n^2/\varepsilon^2 \log ^3m + n^2 / \varepsilon^2 \log^6 m}
    =
    \atmost{n^2 / \varepsilon^2 \log^6 m}.
\end{equation*}

\underline{High probability claim:}
Since \(\ilast = \atmost{\log m}\) and from \Cref{lem:coreset-and-sample},
\(\widetilde H\) is a \SpectralHypersparsifier{} with probability at least \(1 - \atmost{\log m / m^c}\).
The claim follows by choosing \(c \geq 2\).
\end{proof}

\subsection{The Dynamic Algorithm} \label{subsec:dynamic}
In this section, we dynamize our static algorithm (\Cref{alg:spectral-sparsify}).
We first design a decremental data structure in \Cref{subsub:decremental}, where \(H\) undergoes only hyperedge deletions.
Then, we reduce it to a fully dynamic data structure in \Cref{subsub:fully-dynamic} using the following lemma.

\begin{lemma} \label{lem:reduction}
Assume that there is a decremental algorithm that with probability at least \(1 - 1/\poly{m'}\), maintains a \SpectralHypersparsifier{} of any hypergraph with \(m'\) initial hyperedges in \(T(m',n,\varepsilon^{-1})\) amortized update time  and of size \(S(m',n,\varepsilon^{-1})\), where \( S \) and \( T \) are  monotone non-decreasing functions. 
Then, the algorithm can be transferred into a fully dynamic algorithm that, with high probability, maintains a \SpectralHypersparsifier{} of any hypergraph with \(m\) hyperedges (at any point) in \atmost{T(m,n,\varepsilon^{-1}) \log m} amortized update time of size \atmost{S(m,n,\varepsilon^{-1})\log m}.
\end{lemma}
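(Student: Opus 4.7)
The plan is to apply a Bentley--Saxe-style logarithmic-method reduction, leveraging the decomposability of spectral sparsifiers (\Cref{lem:decomposability}). Setting $l = \lceil \log_2 m \rceil$, I would maintain a partition $I_0, I_1, \dots, I_l$ of the hyperedges of $H$ obeying the invariant $|E(I_i)| \le 2^i$ at all times. On each $I_i$, I run an independent copy of the assumed decremental data structure, producing a sparsifier $\widetilde I_i$ of $I_i$. The output is $\widetilde H := \bigcup_{i=0}^l \widetilde I_i$, which is a $(1\pm\varepsilon)$-spectral hypersparsifier of $H$ by decomposability.

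Deletions are simple: forward each to the unique level containing the edge, paying $T(m,n,\varepsilon^{-1})$ by monotonicity of $T$. Insertions are driven by a binary counter $c$: on the $c$th insertion of edge $e$, let $j$ be the position of the lowest set bit of $c$; merge $I_0 \cup \dots \cup I_{j-1} \cup \{e\}$ into $I_j$, empty $I_0, \dots, I_{j-1}$, and reinitialize a fresh decremental instance on the new $I_j$. A standard inductive argument on the counter (identical to classical Bentley--Saxe) shows that $I_j$ is empty immediately before each such merge, because its last modification was an emptying caused by some higher-level merge; hence $|E(I_j)| \le 1 + \sum_{i<j} 2^i = 2^j$ afterward, preserving the invariant. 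Once $c$ reaches $2^l$, I perform a global rebuild that consolidates all surviving hyperedges into $I_l$ (capacity $\ge m$) and resets $c$ to zero.

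The main obstacle is amortizing the insertion cost. Reinitializing the decremental structure at level $j$ on $|E(I_j)| \le 2^j$ edges takes $O(2^j \cdot T(m,n,\varepsilon^{-1}))$ time (treating initialization as preprocessing charged at the same amortized per-edge rate $T$, which is standard for such reductions). Because level $j$ is rebuilt exactly once per $2^j$ insertions, it contributes $O(T(m,n,\varepsilon^{-1}))$ amortized per insertion; summing over the $l+1 = O(\log m)$ levels yields $O(T(m,n,\varepsilon^{-1}) \log m)$ amortized per insertion. The periodic global rebuild costs $O(m \cdot T(m,n,\varepsilon^{-1}))$ every $2^l \ge m$ insertions, adding only another $O(T(m,n,\varepsilon^{-1}))$ per insertion.

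The remaining bounds are direct. By monotonicity of $S$, $|E(\widetilde H)| \le \sum_{i=0}^l S(|E(I_i)|, n, \varepsilon^{-1}) \le (l+1) \cdot S(m,n,\varepsilon^{-1}) = O(S(m,n,\varepsilon^{-1}) \log m)$. For the high-probability guarantee, I would union-bound over all decremental instantiations during the execution: across any polynomially bounded update sequence there are at most $\operatorname{poly}(m)$ reinitializations, each succeeding with probability $1 - 1/\operatorname{poly}(m')$, so choosing a sufficiently large polynomial in the decremental guarantee (or boosting via independent repetitions) delivers overall success probability $1 - 1/\operatorname{poly}(m)$.
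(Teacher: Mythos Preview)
Your proposal is correct and follows essentially the same approach as the paper: both use the Bentley--Saxe logarithmic-method reduction driven by a binary counter, forward deletions to the unique containing level, invoke decomposability (\Cref{lem:decomposability}) for correctness, and obtain the $O(\log m)$ overheads in size and amortized time via the standard per-level charging argument together with a union bound for the high-probability guarantee. Your explicit invariant that $I_j$ is empty before each merge and the periodic global rebuild are details the paper leaves implicit, but they do not constitute a different route.
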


The fully dynamic to decremental reduction (\Cref{lem:reduction}) uses the batching technique \cite{Abraham:2016ab,Goranci:2025aa,Khanna:2025aa} and is explained in \Cref{subsub:fully-dynamic}.

\subsubsection{Decremental Spectral Sparsifier} \label{subsub:decremental}
In this section, we explain our data structure (\Cref{alg:decremental-spectral-sparsify}) to decrementally maintains a \SpectralHypersparsifier{} \(\widetilde H\) of \(H\).
As \Cref{alg:coreset-and-sample} is used as a subroutine of \Cref{alg:spectral-sparsify}, we begin by explaining its decremental implementation (presented in \Cref{alg:decremental-coreset-and-sample}).

\paragraph*{Decremental Implementation of \Cref{alg:coreset-and-sample}.}
To decrementally maintain a coreset \C{} and a sampled hypergraph \S{} of \(H\), we need to ensure that after each deletion, the maintained \(\C{} \cup \S{}\) remains a valid sparsifier for \(H\).
i.e., it continues to be a \SpectralHypersparsifier{} of \(H\) (\Cref{lem:coreset-and-sample}).

Recall that, for every pair \((u, v) \in V \times V\),
 \Cref{alg:coreset-and-sample} defines the set \E{u, v} of hyperedges where \(e \in \E{u,v}\) iff \(u \in \tail{e}\) and \(v \in \head{e}\).
It then constructs a coreset \C{} by adding \(\lambda\) (defined in \Cref{alg:coreset-and-sample}) heaviest hyperedges of \(\E{u,v} \setminus \C{}\) to \C{}. 
The hypergraph \S{} is then obtained by sampling each hyperedge in \(H \setminus \C{}\) with probability \(1/2\) while doubling its weight.

To construct \C{}, the algorithm greedily chooses a pair \((u, v)\), adds its heavyweight hyperedges to \C{}, and continues with another pair \((u', v')\).
Note that the order of choosing the pairs might affect the choice of hyperedges included in \C{}.
Nevertheless, the guarantee of the algorithm (\Cref{lem:coreset-and-sample}) is independent of this order.
For example, it does not matter whether \((u, v)\) is chosen first or \((u', v')\) is.
We will use this fact later to maintain a valid \C{} and \S{} after each deletion.

We use the same procedure to initialize the decremental implementation.
Since we would need to find the heaviest hyperedges in \(\E{u, v}\setminus \C{}\) after a deletion, we also order each \E{u,v}.

Suppose that hyperedge \(e\) has been removed from \(H\).
Then, \(e\) must belong to one of the three cases below, which together cover all hyperedges of \(H\).

\begin{itemize}
\item  
If \(e\) belongs to neither \C{} nor to \S{}, then its removal does not affect \(\C{} \cup \S{}\).
This is because \C{} still contains the heaviest hyperedges associated with each pair \((u, v) \in V \times V\), and each hyperedge in \S{} has been independently sampled.
    
\item 
If \(e\) belongs to \S{}, similar to the previous case, the sampled hypergraph \S{} after the removal of \(e\) is still a valid sample.
Thus, no further changes are required to maintain \(\C{} \cup \S{}\).
    
\item 
If \(e\) belongs to \C{}, then the deletion from \C{} translates to undoing the addition of \(e\) to \C{} from the specific set \E{u, v} that originally added \(e\) to \C{}.
In this case, since \(e\) no longer exists in \E{u, v}, we add a heaviest hyperedge \(e'\) from \(\E{u, v} \setminus \C{}\) to \C{}.
Since the order of hyperedge addition to \C{} does not affect its guarantees (\Cref{lem:coreset-and-sample}), the updated \C{} remains valid.
For bookkeeping reasons, if \(e'\) was previously sampled, we remove it from \S{}.
Based on the previous discussion, the updated \S{} is also valid. 
\end{itemize}

In addition to the changes explained above, the maintenance involves the removal of \(e\) from every set \E{u, v} containing \(e\), which, as explained in the lemma below, adds \atmost{\log m} overhead to the update time.
Putting it all together, \Cref{alg:decremental-coreset-and-sample} is our decremental data structure for maintaining \(\C{} \cup \S{}\).
We have the following.

\begin{algorithm}[t]
\KwIn{an \(m\)-edge hypergraph \(H = (V, E, \vect{w})\) and \( 0 < \varepsilon < 1 \)}
\Maintain{a coreset \C{} and a sampled hypergraph \S{} of \(H\)}
$C, S \gets (V, \emptyset, \vect{0})$\;

\Procedure{Initialize}{
\((C, S) \gets \textsc{Coreset-And-Sample(\( H, \varepsilon \))}\)
\tcc*[f]{initialize \(\C{}\) and \(\S{}\)}

    \ForEach(){pair \( (u, v) \in V \times V \)}{
        Order hyperedges in \E{u, v} from highest to lowest weight
    }
}

\Procedure{Delete(\(e\))}{
    Remove \(e\) from \(H\)
    
    \If{\C{} contains \(e\)}{
    Let \E{u, v} be the set from which \(e\) was added to \C{}\;
    Choose a hyperedge \(e'\) in \(\E{u, v} \setminus \C{}\) with highest weight and add it to \C{}\;
    Remove \(e'\) from \S{}\;
    Remove \(e\) from \C{}
    }
    
    \ForEach{pair \((u, v) \in V \times V\) with \(u \in \tail{e}\) and \(v \in \head{e}\)}{
        Remove \(e\) from \E{u, v}
    }
    
    Remove \(e\) from \S{}
}
\caption{Decremental-Coreset-And-Sample(\( H, \varepsilon \))}
\label{alg:decremental-coreset-and-sample}
\end{algorithm}

\begin{lemma} \label{lem:decremental-coreset-and-sample}
Given a constant \(c \geq 2\) and an \(m\)-edge \(n\)-vertex hypergraph \(H = (V, E, \vect{w})\) of rank \(r\) undergoing hyperedge deletions.
If \(m \geq c \log m\), then \Cref{alg:decremental-coreset-and-sample} maintains a \SpectralHypersparsifier{} \(\C{} \cup \S{}\) of \(H\) of size \atmost{m/2+ (2 c m \log m)^{1/2} + n^2 / \varepsilon ^2 \log ^3 m} in \atmost{r^2 \log m} amortized update time with probability at least \(1 - 1/m^{c-1}\).
\end{lemma}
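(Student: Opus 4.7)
The plan is to establish the four claims---spectral approximation, size, update time, and success probability---by maintaining an invariant that reduces each claim to \Cref{lem:coreset-and-sample}. The central observation is that at every point in the execution, the pair $(\C{}, \S{})$ produced by \Cref{alg:decremental-coreset-and-sample} is a valid output of the static \Cref{alg:coreset-and-sample} on the current hypergraph $H$, so \Cref{lem:coreset-and-sample} can be invoked at each time slice.

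First I will formalize this invariant: at any time, $\C{}$ is a coreset that \Cref{alg:coreset-and-sample} could produce on the current $H$ under a specific (adaptive) processing order of the pairs $(u,v)$, and $\S{}$ consists of exactly the surviving hyperedges $e \in H \setminus \C{}$ whose initial Bernoulli($1/2$) coin $X_e$ came up heads. I prove preservation by case analysis on the deleted hyperedge $e$. If $e \notin \C{} \cup \S{}$, nothing structural changes. If $e \in \S{} \setminus \C{}$, the remaining $\S{}$ is still a valid uniform sample of $(H \setminus \{e\}) \setminus \C{}$. If $e \in \C{}$, the algorithm promotes the next-heaviest $e' \in \E{u, v} \setminus \C{}$ into $\C{}$ and pulls $e'$ out of $\S{}$ if it was sampled; this is exactly what \Cref{alg:coreset-and-sample} would do on $H \setminus \{e\}$ if it processed $(u,v)$ last, and the coin of $e'$ becomes irrelevant to $\S{}$ once $e' \in \C{}$. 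Invariance of \Cref{lem:coreset-and-sample} under pair-processing order (already noted in the description of \Cref{alg:coreset-and-sample}) legitimizes this swap.

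Given the invariant, the failure probability at each time is controlled by \Cref{lem:coreset-and-sample}. To get a bound uniform in the initial $m$, I apply the Chernoff step inside the proof of \Cref{lem:coreset-and-sample} with the slightly larger deviation $\delta = \sqrt{8c \log m / |E(H)|}$; this replaces $\log |E(H)|$ by $\log m$ in the size expression (still consistent with the claimed $O(m/2 + (2cm\log m)^{1/2})$ bound) and yields a per-time failure probability of at most $1/m^c$. A union bound over the at most $m$ deletions then gives success probability $1 - 1/m^{c-1}$. The sparsifier size then follows immediately: $|\S{}|$ starts at $O(m/2 + (2cm\log m)^{1/2})$ and only shrinks, while $|\C{}|$ is always at most $\lambda n^2 = O(n^2 \log^3 m / \varepsilon^2)$ since each of the $n^2$ pairs contributes at most $\lambda$ hyperedges to $\C{}$.

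For update time, I store each $\E{u, v}$ in a balanced BST keyed by weight and maintain, for each $e \in \C{}$, a pointer back to the pair $(u,v)$ that originally added $e$. A deletion of $e$ then costs $O(\log m)$ to find the replacement in $\E{u, v} \setminus \C{}$ (when $e \in \C{}$), plus $O(\log m)$ per removal of $e$ from each $\E{u, v}$ containing $e$, of which there are at most $|\tail{e}| \cdot |\head{e}| \leq r^2$; all auxiliary membership tests for $\C{}$ and $\S{}$ are $O(1)$ via hash tables. This gives $O(r^2 \log m)$ per deletion. The principal obstacle I anticipate is verifying that the replacement step is distributionally equivalent to a fresh execution of \Cref{alg:coreset-and-sample} on the post-deletion hypergraph; the key is that removing $e'$ from $\S{}$ does not bias the remaining coins, because those coins are independent of which hyperedge the algorithm promotes to $\C{}$, and the identity of the promoted $e'$ is determined deterministically by the weights.
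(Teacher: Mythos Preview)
Your proposal is correct and follows essentially the same approach as the paper: establish that after each deletion the pair $(\C{},\S{})$ remains a valid output of \Cref{alg:coreset-and-sample} on the current $H$ via the same three-case analysis, invoke \Cref{lem:coreset-and-sample} per time step, and union-bound over the at most $m$ deletions. The only notable omission is that you do not account for the $O(mr^2\log m)$ initialization cost (building and sorting the $\E{u,v}$'s) in your amortized bound; the paper absorbs this into the total work before dividing by $m$, which you should do as well.
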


\begin{proof}
Suppose the hyperedge \(e\) is removed from \(H\).

\underline{Correctness:} 
From the discussion above, it follows immediately that \(\C{} \cup \S{}\), after the update, is a valid sparsifier for \(H\); the correctness of its spectral property is addressed in the high probability claim below.

\underline{Size of \(\C{} \cup \S{}\):}
Since after each hyperedge deletion from \(\C{} \cup \S{}\), the algorithm substitutes it with at most one other hyperedge from \(H\), the size of \(\C{} \cup \S{}\) is monotonically decreasing.
Thus, \(\C{} \cup \S{}\) has the same size guarantee as of \Cref{alg:coreset-and-sample} explained in \Cref{lem:coreset-and-sample}, which is
\begin{equation*}
    \atmost{m/2 + (2 c m \log m)^{1/2} + \lambda n^2} = \atmost{m/2+ (2 c m \log m)^{1/2} + n^2 / \varepsilon ^2 \log ^3 m}.
\end{equation*}
    
\underline{Update time:} 
Since each hyperedge \(e\) contains \atmost{r} vertices, the total number of \E{\cdot, \cdot}'s containing \(e\) is \atmost{r^2}.
Thus, constructing \E{\cdot, \cdot}'s, %
takes \atmost{mr^2} time, while ordering them adds \atmost{mr^2 \log m} time to the initialization time as well.
As explained before, the deletion of \(e\) from \(H\) translates to its deletion from at most \(r^2\) sets of \E{\cdot, \cdot}'s, each of which takes \atmost{\log m} time, i.e., \atmost{mr^2 \log m} total time.
Other changes, such as substituting another hyperedge in \(S\), can be done by probing \E{\cdot, \cdot}'s only once in total.
We conclude that the total update time is \atmost{mr^2 \log m}, resulting in \atmost{r^2 \log m} amortized update time.

\underline{High probability claim:}
By choosing \(c \geq 2\) in \Cref{lem:coreset-and-sample}, each update is guaranteed to succeed with probability at least \(1 - 1/m^c\).
Since there are at most \(m\) updates, it follows that \Cref{alg:decremental-coreset-and-sample} succeeds with probability at least \(1 - 1/m^{c-1}\).%
\end{proof}

\paragraph*{Decremental Implementation of \Cref{alg:spectral-sparsify}.}
Our decremental data structure for maintaining a \SpectralHypersparsifier{} \(\widetilde H\) of \(H\) (\Cref{alg:decremental-spectral-sparsify}) is a decremental implementation of \Cref{alg:spectral-sparsify}.
Recall that, in \Cref{alg:spectral-sparsify}, we recurse on the sampled hypergraphs \(\S{1}, \dots, \S{\ilast}\) and add the coresets \(\C{1}, \dots, \C{\ilast}\) to \(\widetilde H\).
By \Cref{lem:spectral-sparsify}, \(\widetilde H = \bigcup_{j = 1} ^\ilast \C{j} \cup \S{\ilast}\) is a \SpectralHypersparsifier{} of \(H\).

\begin{algorithm}[t]
\KwIn{an \(m\)-edge hypergraph \(H = (V, E, \vect{w})\) and \( 0 < \varepsilon < 1 \)}
\Maintain{a \SpectralHypersparsifier{} \(\widetilde H\) of \(H\)}
\( i \gets 0 \)\;
\( k \gets \lceil \log _{3/4} m \rceil \)\;
\(m^\star \gets n^2/\varepsilon^2 \log ^3m\)\;
\( \S{0} \gets H \)\;

\Procedure{Initialize}{
    \While(\tcc*[f]{\(c\) is from \Cref{lem:coreset-and-sample}}){\( i \leq k \) and \( |E(H_i)| \geq 32 c m^\star \)}{
    
        \(\mathcal A_{i+1} \gets\) initialize \textsc{Decremental-Coreset-And-Sample(\( S_i, \varepsilon /(2k) \))} 
        
        \tcc*[f]{\(\mathcal A_{i+1}\) decrementally maintains \(\C{i+1}\) and \(\S{i+1}\)}
        
        \(\sparsifier{H} \gets \sparsifier{H} \cup \C{i+1}\)\;
        \( i \gets i + 1 \)
        }
     
    \( \ilast \gets i \)\;
    \(\sparsifier{H} \gets \sparsifier{H} \cup \S{\ilast}\)
}

\Procedure{Delete(\(e\))}{
    Remove \(e\) from \(H\) and \(\widetilde H\)\;
    \(i \gets 1\)\;
    
    \While{\(i \leq \ilast\)}{
        Pass the deletion of \(e\) to \(\mathcal A_i\)
        
        \If{\(e'\) has been added to \C{i} due to the deletion of \(e\)}{
            Add \(e'\) to \(\widetilde H\)\;
            Remove \(e'\) from \S{i}\;
            \(e \gets e'\)\;
        }
        
        \(i \gets i + 1\)\;
    }
}

\caption{Decremental-Spectral-Sparsify(\( H, \varepsilon \))}
\label{alg:decremental-spectral-sparsify}
\end{algorithm}

In \Cref{alg:decremental-spectral-sparsify}, we decrementally maintain the hypergraphs \(\S{1}, \dots, \S{\ilast}\) and the coresets \(\C{1}, \dots, \C{\ilast}\).
Suppose that hyperedge \(e\) has been deleted from \(H\) as the most recent update.
Below, we discuss how the data structure handles the deletion in all possible scenarios.
\begin{itemize}
\item  
If \(e\) does not belong to \(\widetilde H\), then
\(e\) can only exist in the sets of sampled hyperedges \(\S{1}, \dots, \S{\ilast - 1}\).
In this case, removing \(e\) from all \(\S{i}\)'s does not affect \(\widetilde H\) since the set of sampled hyperedges remains valid, as each hyperedge is sampled independently.

\item 
If \(e\) belongs to \(\S{\ilast}\), then \(e\) is present in all sets of sampled hyperedges, 
i.e., in \(\S{1}, \dots, \S{\ilast}\).
Note that, although \(e\) belongs to \(\widetilde H\) in this case, since it is only present in sampled hypergraphs, its removal does not affect the coresets and can be handled similarly to the previous case.

\item 
If \(e\) belongs to a coreset \C{i}, then it belongs to the sets of sampled hyperedges \(\S{1}, \dots, \S{i-1}\), and since it is in the coreset \C{i} of \S{i-1}, it cannot be present in \(\S{i}, \dots, \S{\ilast}\).
Similar to the previous cases, we can simply remove \(e\) from \(\S{1}, \dots, \S{i-1}\) to maintain \(\C{1} \cup \S{1}, \dots, \C{i-1} \cup \S{i-1}\).
For \(\C{i} \cup \S{i}\), we need to maintain \C{i}, for which we use \Cref{alg:decremental-coreset-and-sample} as a subroutine to find a hypergraph \(e'\) to be added to \C{i} as the substitution for \(e\).
By \Cref{alg:decremental-coreset-and-sample}, \(e'\) belongs to \S{i-1} and thus
its addition to \C{i} does not affect \(\C{1} \cup \S{1}, \dots, \C{i-1} \cup \S{i-1}\).
On the other hand, \(e'\) might be present in \(\C{i+1} \cup \S{i+1}, \dots, \C{\ilast} \cup \S{\ilast}\), and we need to ensure they are still valid after adding \(e'\) to \C{i}.
If \(e'\) appears in the sets of sampled hyperedges, then similar to the previous cases, simply removing \(e'\) from the sparsifiers makes them valid.
Therefore, we pass the deletion of \(e'\) to the next sparsifiers until we reach \(\C{j} \cup \S{j}\) containing \(e'\) in its coreset \C{j}.
Again, by construction, \(e'\) cannot be present in \(\C{j+1} \cup \S{j+1}, \dots, \C{\ilast} \cup \S{\ilast}\), and we need to remove \(e'\) from \C{j}.
This procedure is similar to the one we just explained for the removal of \(e\) from \C{i}, except that we now use it to remove \(e'\) from \C{j}.
\end{itemize}

As explained above, to handle hyperedge deletions in each sparsifier \(\C{i} \cup \S{i}\), the data structure utilizes \Cref{alg:decremental-coreset-and-sample} and then transmits the changes in \(\C{i} \cup \S{i}\) to \(\C{i+1} \cup \S{i+1}\).
The key point of our data structure is that it guarantees the transfer of at most \textit{one} hyperedge deletion from one level to the next, thereby ensuring low recourse at each level.
This results in at most \(\ilast = \atmost{\log m}\) hyperedge deletions across all levels following a hyperedge deletion in \(H\).
The guarantees of the data structure are stated below.

\begin{lemma} \label{lem:decremental-spectral-sparsify}
Given a constant \(c \geq 3\) and an \(m\)-edge \(n\)-vertex hypergraph \(H = (V, E, \vect{w})\) of rank \(r\) undergoing hyperedge deletions, \Cref{alg:decremental-spectral-sparsify} maintains a \SpectralHypersparsifier{} \(\widetilde H\) of \(H\) of size \atmost{n^2 / \varepsilon ^2 \log ^6 m} in \atmost{r^2 \log ^2 m} amortized update time with probability at least \(1 - 1/m^{c-2}\).
Additionally, each deletion in \(H\) results in \atmost{\log m} recourse in \(\widetilde H\).
\end{lemma}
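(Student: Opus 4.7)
The plan is to verify the four guarantees (approximation, size, update time, success probability) by induction on the level index $i$, reducing each claim to the per-level guarantees established in \Cref{lem:decremental-coreset-and-sample} together with the static analysis of \Cref{lem:spectral-sparsify}.

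First I would establish the structural invariant: after every deletion, each pair $(\C{i}, \S{i})$ is a valid output of \Cref{alg:coreset-and-sample} on the current $\S{i-1}$. The case analysis in the algorithm's description shows that when a hyperedge $e$ is removed from $\S{i-1}$, either it was outside $\C{i}$ (so $\C{i}$ is unaffected and $\S{i}\setminus\{e\}$ is still a uniform $1/2$-sample of $\S{i-1}\setminus\{e\}$ with weights doubled), or it lay in $\C{i}$ (so $\mathcal{A}_i$ substitutes a heaviest replacement $e'\in \E{u,v}\setminus\C{i}$, which by \Cref{lem:decremental-coreset-and-sample} keeps $\C{i}$ a $\lambda$-coreset, and the deletion of $e'$ from $\S{i}$ is forwarded into the next level). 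The key point is that the sampling randomness of $\S{i}$ is fixed once and for all: removing a hyperedge never re-samples, and adding $e'$ to $\C{i}$ simply excises it from the sampled side, preserving the distribution. Given this invariant, \Cref{lem:coreset-and-sample} applied level-by-level shows that $\C{i}\cup\S{i}$ is a $(1\pm\varepsilon/(2k))$-spectral hypersparsifier of $\S{i-1}$, and the telescoping argument from \Cref{lem:spectral-sparsify} (via \Cref{lem:decomposability}) then yields that $\widetilde H=\bigcup_{j=1}^{\ilast}\C{j}\cup\S{\ilast}$ is a \SpectralHypersparsifier{} of $H$.

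For the size and recourse bounds, I would use the just-established invariant to argue that after any sequence of deletions the current $\widetilde H$ has the same structure as the output of a static run of \Cref{alg:spectral-sparsify} on the current $H$, and therefore satisfies the size bound \atmost{n^2/\varepsilon^2\log^6 m} from \Cref{lem:spectral-sparsify}. For the recourse bound, the crucial observation---and this is the main obstacle of the proof---is that each deletion into a level produces \emph{at most one} outgoing deletion to the next level: either the deleted hyperedge was unsampled (zero forwarded deletions), was in $\S{i}$ (the deletion of $e$ itself is forwarded), or was in $\C{i}$ (the replacement $e'$ is removed from $\S{i}$ and its deletion forwarded). Since $\ilast = \atmost{\log m}$, a single deletion in $H$ triggers at most one update per level, giving \atmost{\log m} recourse in $\widetilde H$.

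For the amortized update time, each forwarded deletion invokes $\mathcal{A}_i$, whose amortized cost is \atmost{r^2\log m} by \Cref{lem:decremental-coreset-and-sample}; summing over the $\atmost{\log m}$ levels yields \atmost{r^2\log^2 m} per deletion in $H$. Finally, for the high-probability guarantee, \Cref{lem:decremental-coreset-and-sample} with parameter $c$ ensures that each $\mathcal{A}_i$ succeeds with probability at least $1 - 1/m^{c-1}$ throughout its lifetime; a union bound over the $\atmost{\log m}$ levels yields overall success probability at least $1 - \atmost{\log m}/m^{c-1}\geq 1-1/m^{c-2}$ for $c\geq 3$, completing the proof.
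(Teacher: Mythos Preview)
The proposal is correct and follows essentially the same approach as the paper: per-level correctness via \Cref{lem:decremental-coreset-and-sample}, the telescoping argument from \Cref{lem:spectral-sparsify} and \Cref{lem:decomposability} for the approximation guarantee, the single-deletion-propagation-per-level observation for both recourse and amortized time, and a union bound over the $\atmost{\log m}$ levels for the probability. One minor imprecision worth noting: your size argument invokes ``the same structure as the output of a static run of \Cref{alg:spectral-sparsify} on the current $H$'', which is not literally true (the decremental data structure does not recompute from scratch, and a fresh static run on the smaller $H$ could have fewer levels and different coresets); the paper instead bounds the size directly by observing that each $\C{i}$ always has size $\atmost{\lambda_i n^2}$ and that $|\S{\ilast}|$ only shrinks from its initial $\atmost{n^2/\varepsilon^2 \log^3 m}$ value, but this does not affect the validity of your overall argument.
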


\begin{proof}
Suppose that a deletion has happened in \(H\).

\underline{Correctness:}
From \Cref{lem:decremental-coreset-and-sample}, each \(\mathcal A_i\) correctly handles the deletions.
The fact that the data structure correctly transfers hyperedge deletions between \(\mathcal A_i\)'s and thus correctly maintains \(\widetilde H\) follows from the discussion above.

\underline{Size of \(\widetilde H\):}
The data structure includes \(\ilast = \atmost{\log m}\) coresets \(\C{1}, \dots, \C{\ilast}\) in \(\widetilde H\).
By \Cref{lem:decremental-coreset-and-sample}, for each \(1 \leq i \leq \ilast\), 
\C{i} has a size of \atmost{n^2/\varepsilon _i ^{2} \log ^3 m}, where \(\varepsilon _i = \varepsilon / (2k)\) and \(k = \atmost{\log m}\).
As shown in the proof of \Cref{lem:spectral-sparsify}, the size of \S{\ilast} is \atmost{n^2/\varepsilon^2 \log ^3m}.
It follows that the size of \(\widetilde H\) is
\begin{equation*}
    \atmost{ \sum_{i = 1} ^k n^2/\varepsilon _i ^{2} \log ^3 m + n^2/\varepsilon^2 \log ^3m} = \atmost{n^2/\varepsilon ^{2} \log ^6 m}.
\end{equation*}

\underline{Update time:} 
The data structure initializes and decrementally maintains 
\(\ilast = \atmost{\log m}\) data structures of \Cref{alg:decremental-coreset-and-sample}.
By \Cref{lem:decremental-coreset-and-sample}, each data structure takes \atmost{m r^2 \log m} total update time.
Also, at each update, the data structure transmits  at most one hyperedge  from one level to the next, which results in an \atmost{m \log m} total transmission.
Thus, the algorithm takes \atmost{m r^2 \log^2 m} total update time, or equivalently,
\atmost{r^2 \log^2 m} amortized update time.

\underline{High probability claim:}
From \Cref{lem:decremental-coreset-and-sample}, for any constant \(c \geq 2\), each \(\mathcal A_i\) correctly maintains its coreset and the set of sampled hyperedges with probability at least \(1 - 1/m^{c - 1}\).
Since \(1 \leq i \leq \log m\), by choosing \(c \geq 3\), the claim follows.

\underline{Recourse bound:} %
As explained before, every hyperedge deletion from \(H\) translates to at most one hyperedge deletion or addition in \(\mathcal A_i\) for \(1 \leq i \leq \ilast\).
Since \(\ilast = \atmost{\log m}\), it follows that the total number of hyperedge changes in \(\widetilde H\) is \atmost{\log m}.
\end{proof}

\subsubsection{Fully Dynamic Spectral Sparsifier} \label{subsub:fully-dynamic}

With a decremental data structure (\Cref{alg:decremental-spectral-sparsify}) in hand, we now obtain the fully dynamic data structure.
We begin by explaining the reduction from the fully dynamic to the decremental data structure and by proving \Cref{lem:reduction}.
We then prove \Cref{th:main}.

\paragraph*{Reduction from Fully Dynamic to Decremental.}
We explain how to use the decremental data structure to design a fully dynamic data structure, as described in \Cref{alg:reduction}.
In a nutshell, the data structure uses the batching technique; it maintains a batch of \((1 \pm \varepsilon)\)-spectral hypersparsifiers \(\widetilde H_1, \dots, \widetilde H_k\), each of which decrementally maintains sub-hypergraphs \(H_1, \dots, H_k\), respectively, such that these sub-hypergrphs partition \(H\).
The union \(\widetilde H = \bigcup_{i = 1} ^k \widetilde H_i\), by decomposability (\Cref{lem:decomposability}), is a \SpectralHypersparsifier{} of \(H\).

\begin{algorithm}[t]
\KwIn{an empty \(n\)-vertex hypergraph \(H = (V, E, \vect{w})\) with \(|E| \leq m\) at any time}
\Maintain{a \SpectralHypersparsifier{} \(\widetilde H\) of \(H\)}
\Procedure{Initialize}{
    \ForEach{\(1 \leq i \leq \log m\)}{
        \(H_i, \widetilde H_i \gets (V, \emptyset, \vect{0})\)
    }
    
    \(t \gets 0\)\;
    \(\ilast \gets 1\)\;
}

\Procedure{Add(\(e\))}{
    \(t \gets t+1\)\;
    \(j \gets \max\{i \mid \text{\(t\) is divisible by \(2^{i-1}\)}\}\)\;
    \(\ilast \gets \max\{ j, \ilast \}\)\;
    \(H_{j} \gets H_{j} \cup H_{j-1} \cup \dots \cup H_1\)\;
    Add \(e\) to \(H_{j}\)\;
    
    \ForEach{\(1 \leq i \leq j - 1\)}{
        \(H_i, \widetilde H_i \gets (V, \emptyset, \vect{0})\)\;
    }
    
    \(\mathcal A_{j} \gets\) (re)initialize \textsc{Decremental-Spectral-Sparsify(\( H_{j}, \varepsilon \))}\; 
    \tcc*[f]{\(\mathcal A_{j}\) decrementally maintains the \textit{newly} computed \(\widetilde H_{j}\)}\;
}

\Procedure{Delete(\(e\))}{
    \(j \gets \text{index of the specific hypergraph among \(H_1, \dots, H_n\) that contains \(e\)}\)\;
    Pass the deletion of \(e\) to \(\mathcal A_j\)\;
}

\Procedure{OutputSparsifier()}{
    \Return \(\widetilde H = \bigcup_{i = 1} ^ \ilast \widetilde H_i\)\; 
}

\caption{Fully-Dynamic-Spectral-Sparsify(\(H, \varepsilon\))}
\label{alg:reduction}
\end{algorithm}

More specifically, the data structure starts with an empty \(H\) and empty \(H_1, \dots, H_k\) and ensures that each \(H_i\) contains at most \(2^i\) hyperedges at all time (which we call the size constraint of \(H_i\)).
Intuitively, a hyperedge \(e\) inserted in \(H\) is placed in \(H_1\) as long as the size constraint of \(H_1\) is not violated.
If there are already two hyperedges in \(H_1\), the data structure moves \textit{all} hyperedges of \(H_1\) (along with \(e\)) to the empty \(H_2\).
Note that \(H_2\) can contain at most four hyperedges, and so its size constraint is not violated.
However, if there were already two hyperedges in \(H_2\), then the addition of three more hyperedges (from \(H_1\) plus \(e\)) would violate its size constraint.
In this case, the data structure would move the hyperedges of \(H_1\) and \(H_2\) plus \(e\) to \(H_3\), and so on.

To regularize the insertion process, we initialize a counter \(t\) as a binary sequence of \(\log k\) zeros.
After each insertion, the data structure increments \(t\) (in the binary format) by one and finds the highest index \(i\) whose bit flipped due to the increment.
The data structure then moves the hyperedges in \(H_1, \dots, H_{i-1}\), along with the inserted hyperedge \(e\), to \(H_i\) and reinitializes the decremental data structure with the new \(H_i\).

The deletion process can be done as before; since each hyperedge is associated with exactly one sub-hypergraph \(H_i\), its deletion would be easily handled by passing it to the decremental data structure maintaining \(\widetilde H_i\).

\begin{proof}[Proof of \Cref{lem:reduction}]
Suppose that the decremental data structure maintains a \SpectralHypersparsifier{} of any hypergraph with \(m'\) initial hyperedges in \(T(m',n,\varepsilon^{-1})\) amortized update time and of size \(S(m',n,\varepsilon^{-1})\) with probability at least \(1 - 1/\poly{m'}\).

\underline{Correctness:}
From the discussion above, the sub-hypergraphs \(H_1, \dots, H_k\) correctly partition the hyperedges of \(H\).
By \Cref{lem:decremental-spectral-sparsify}, the sparsifiers \(\widetilde H_1, \dots, \widetilde H_k\) are correctly maintained.
It follows from \Cref{lem:decomposability} that \(\widetilde H = \bigcup_{i = 1} ^k \widetilde H_i \) is a \SpectralHypersparsifier{} of \(H\).

\underline{Size of \(\widetilde H\):}
By assumption, each \(\widetilde H_i\) has size \(S(m_i,n,\varepsilon^{-1})\), which is upper bounded by \(S(m,n,\varepsilon^{-1})\) since \(m_i \leq m\) and \(S\) is a  monotone non-decreasing function.
Since \(k = \atmost{\log m} \), the size of \(\widetilde H\) is bounded by \(\sum _{i = 1} ^{\log m} S(m_i,n,\varepsilon^{-1}) = \atmost{S(m,n,\varepsilon^{-1}) \log m}.\)

\underline{Update time:}
After \(l\) insertions, \(H_i\) has been reinitialized for at most \(l/2^i\) times.
This is because \(H_i\) can contain at most \(m_i = 2^i\) hyperedges.
By assumption, the total time for the initialization and maintenance of \(\mathcal A_i\) is bounded by \(m_i T(m_i,n,\varepsilon^{-1}) = 2^i T(m_i,n,\varepsilon^{-1})\), which is bounded by \(2^i T(m,n,\varepsilon^{-1})\) since \(m_i \leq m\) and \(S\) is a  monotone non-decreasing function.
Thus, the total time for the initialization and maintenance of \(\mathcal A_i\) throughout \(l\) insertions is \(\atmost{\frac{l}{2^i} 2^i T(m,n,\varepsilon^{-1}) } = \atmost{l T(m,n,\varepsilon^{-1}) }\).
Therefore, the total update time for maintaining \(\widetilde H_1, \dots, \widetilde H_k\) is
\begin{equation*}
\atmost{\sum_{i = 1} ^{\log m} l T(m,n,\varepsilon^{-1})} = \atmost{l T(m,n,\varepsilon^{-1}) \log m},
\end{equation*}
resulting in an \atmost{T(m,n,\varepsilon^{-1}) \log m} amortized update time.

\underline{High probability claim:}
After \(l\) insertions, each \(H_i\) is reinitialized \atmost{l} times.
By \Cref{lem:decremental-spectral-sparsify}, the probability of failure is at most \(l/m^{c-2}\), where \(c \geq 3\).
Since \(k = \log m\), it follows that the probability of failure for maintaining \(\widetilde H_1, \dots, \widetilde H_k\) is at most \(l \log m/m^{c-2}\).
Since \(l = \poly{m}\), the high probability claim follows by choosing \(c\) to be large enough.
\end{proof}

\begin{proof}[Proof of \Cref{th:main}]
Given an \(m\)-edge \(n\)-vertex hypergraph \(H \) of rank \(r\), by \Cref{lem:decremental-spectral-sparsify},
the decremental data structure (\Cref{alg:decremental-spectral-sparsify}) maintains a \SpectralHypersparsifier{} of \(H\) of size \(S(m',n,\varepsilon^{-1}) = \atmost{n^2 / \varepsilon ^2 \log ^6 m} \) in \(T(m,n,\varepsilon^{-1}) = \atmost{r^2 \log ^2 m} \) amortized update time.

Since \( S \) and \( T \) are  monotone non-decreasing functions, it follows from \Cref{lem:reduction} that, for any hypergraph \(H\) containing at most \(m\) hyperedges at any point, the fully dynamic data structure (\Cref{alg:reduction}) maintains a \SpectralHypersparsifier{} of \(H\) of size \atmost{n^2 / \varepsilon ^2 \log ^7 m} in \atmost{r^2 \log ^3 m} amortized update time.
\end{proof}

\section*{Acknowledgement}
We thank the anonymous reviewers for their helpful suggestions.

SF: This project has received funding from the European Research Council (ERC) under the European Union's Horizon 2020 research and innovation programme (grant agreement No 947702).

\clearpage

\section*{References}
\printbibliography[heading=none]

\clearpage

\newpage
\appendix
\section{Parallel Batch-Dynamic Spectral Sparsification} \label{sec:parallel}
In this section,
we present our batch parallel algorithm
for maintaining a \SpectralHypersparsifier{} \(\widetilde H\)
of a directed hypergraph \(H = (V, E, \vect{w})\)
undergoing hyperedge deletions and additions.

The algorithm is an extension of 
our fully dynamic sequential data structure (\Cref{alg:reduction}).
Recall that to maintain \(\widetilde H\) in the fully dynamic setting,
\Cref{alg:reduction} decrementally maintains 
a sequence of sub-hypergraphs \(H_1, \cdots, H_\ilast\) using \Cref{alg:decremental-spectral-sparsify},
which itself builds upon \Cref{alg:decremental-coreset-and-sample}.
Therefore, to explain our fully dynamic extension,
we begin by explaining how we 
implement \Cref{alg:decremental-coreset-and-sample,alg:decremental-spectral-sparsify} in the batch parallel setting.

\subsection{Parallel Batch-Dynamic Implementation of \Cref{alg:decremental-coreset-and-sample}.}
We first analyze each step of the algorithm individually.
In the end, we will combine them 
to obtain the guarantees of the algorithm.

The initialization process of \Cref{alg:decremental-coreset-and-sample}
consists of finding a coreset \C{} 
and a sampled hypergraph \S{}.
To handle this process, for every pair \(u,v \in V\),
we defined \E{u,v} to contain a hyperedge \(e\)
iff \(u \in \tail{e}\) and \(v \in \head{e}\).
We explain in the claim below how to construct \E{\cdot, \cdot}'s in parallel.
\begin{claim} \label{cl:parallel-E-construction}
All sets \E{\cdot, \cdot} can be constructed in \atmost{mr^2 \log m} total work and \atmost{\log r} depth.
\end{claim}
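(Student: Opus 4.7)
The plan is to exploit two levels of parallelism: across hyperedges, and within each hyperedge across the Cartesian product of its tail and head. The crucial combinatorial observation is that $\sum_e |\tail{e}| \cdot |\head{e}| \le mr^2$, so the total number of (hyperedge, pair) incidences to record is $N \le mr^2$.

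First, for each hyperedge $e$ in parallel, I would enumerate every pair $(u,v) \in \tail{e} \times \head{e}$ using a doubly-nested parallel for-loop over $\tail{e}$ and $\head{e}$. Since $|\tail{e}|, |\head{e}| \le r$, this produces at most $r^2$ triples $(e, u, v)$ per hyperedge. A standard balanced fork–join schedule gives depth $O(\log r)$ for the two nested loops, and the total work for this enumeration phase is $O(mr^2)$.

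Second, I would insert each triple $(e, u, v)$ into the set $E(u,v)$, which, in view of the subsequent coreset routine, I would represent as a weight-ordered structure (e.g., a skip list or balanced BST keyed by $w_e$) supporting $O(\log m)$-work sorted insertion. Summed over all $N \le mr^2$ triples, this yields total work $O(mr^2 \log m)$. Since the pairs $(u,v)$ range over a fixed universe of size at most $n^2$, I can index the data structures directly by $(u,v)$, so insertions targeting different pairs are fully parallel, and concurrent insertions into the same $E(u,v)$ (coming from at most $r^2$ simultaneous producers per hyperedge) can be resolved by a parallel batched-insertion primitive that contributes $O(\log r)$ depth.

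The main obstacle is keeping the depth at $O(\log r)$ rather than the $O(\log m)$ that a generic parallel sort over $N$ elements would incur. I expect to overcome this by avoiding any global sort: because the key space for grouping (the pairs $(u,v)$) has size $n^2$ and is independent of $m$, I can rely on direct indexing into per-pair structures, so the depth is dominated by the enumeration of the Cartesian product and the local batched insertions, both of which are $O(\log r)$.
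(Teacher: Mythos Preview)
Your enumeration of the Cartesian product via balanced fork--join is essentially the paper's divide-and-conquer: they split $\tail{e} = A \cup B$ and $\head{e} = C \cup D$ into halves and recurse on the four combinations in parallel, which is exactly a balanced nested parallel-for with $O(\log r)$ depth. The work accounting also matches.

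The gap is in the depth of the insertion step. You propose inserting each triple into a \emph{weight-ordered} structure and claim the resulting batched insertions cost only $O(\log r)$ depth. This cannot hold: a single pair $(u,v)$ may receive contributions from up to $m$ distinct hyperedges (your ``at most $r^2$ simultaneous producers per hyperedge'' is the wrong count---each hyperedge contributes at most one element to any fixed \E{u,v}, but up to $m$ hyperedges may do so), and any comparison-based ordered insertion or batched merge of up to $m$ keys has depth $\Omega(\log m)$. The paper sidesteps this by separating construction from sorting: the present claim only populates the sets (so the leaf operation at the bottom of the recursion is a plain write, not a sorted insert), and a separate subsequent claim sorts each \E{u,v} via parallel \textsc{MergeSort} in $O(\log^2 m)$ depth. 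If you insist on ordered structures already at this stage, you should expect $O(\log m)$ depth here, not $O(\log r)$.
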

\begin{proof}
Since each hyperedge \(e\)
consists of \atmost{r} vertices,
it can appear in at most \(r^2\) of the sets \E{\cdot, \cdot}'s.
For every hyperedge \(e\), 
we identify the corresponding sets \E{\cdot, \cdot}'s and add \(e\) to them.
Since there are initially $m$ hyperedges and adding them to each \E{\cdot, \cdot} takes \atmost{\log m} work, 
the bound on the total work follows.

The bound on depth follows from
the following observation.
By definition,
to construct \E{\cdot, \cdot}'s,
we need to find the pairs \((u, v)\) for every hyperedge \(e\)
with \(u \in \tail{e}\) and \(v \in \head{e}\)
and add \(e\) to \E{u, v}.
Although \tail{e} and \head{e} have size at most \(r\),
the process can be divided into similar tasks 
on subsets of \tail{e} and \head{e} with size at most \(r/2\).
More precisely,
we can partition \(\tail{e} = A \cup B\) and  \(\head{e} = C \cup D\)
so that \(A, B, C, D\) all have size at most \(r/2\).
To find the pairs associated with \(e\),
we can find the pairs in the four subcases corresponding to \((A, C), (A, D), (B, C), (B, D)\).
Thus, \(D(r) = 4 D(r/2)\), which solves to \atmost{\log r} depth.
\end{proof}

Once the \E{\cdot, \cdot}'s are computed,
the algorithm chooses \(\lambda\) heavyweight hyperedges from each \(\E{\cdot,\cdot} \) 
to be included in coreset \C{}.
Since we look for the heaviest hyperedges in each set,
we sort them from heaviest to lightest
so that they can be accessed more efficiently after an update.
This leads to the following claim.

\begin{claim} \label{cl:parallel-E-sorting}
All sets \E{\cdot, \cdot} can be sorted 
in \atmost{m r^2 \log m} total work 
and \atmost{\log ^2 m} depth.
\end{claim}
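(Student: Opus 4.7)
The plan is to sort each set $E(u,v)$ independently in parallel using a standard parallel comparison-sort. First, I would run a parallel merge sort on each set $E(u,v)$ separately, with all $n^2$ sorts launched in parallel so that their depths do not compose.

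For the work bound, I would use two facts: (i) parallel merge sort on a set of size $s$ performs $O(s \log s)$ work, and (ii) since each hyperedge $e$ has $|\tail{e} \cup \head{e}| \leq r$, it belongs to at most $r^2$ of the sets $E(\cdot,\cdot)$, so $\sum_{(u,v) \in V \times V} |E(u,v)| \leq m r^2$. Combining these with $|E(u,v)| \leq m$, the total work is bounded by
\begin{equation*}
    \sum_{(u,v) \in V \times V} O(|E(u,v)| \log |E(u,v)|) \;\leq\; O(\log m) \sum_{(u,v)} |E(u,v)| \;\leq\; O(m r^2 \log m),
\end{equation*}
as desired.

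For the depth bound, since the sorts on distinct sets are independent, the overall depth is the maximum depth of a single sort. Parallel merge sort on a set of size at most $m$ has depth $O(\log^2 m)$ (the recursion tree has depth $O(\log m)$, and each merge step contributes an additional $O(\log m)$ factor via parallel merging of two sorted arrays). This yields the claimed $O(\log^2 m)$ depth bound.

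No step here is particularly delicate; the main thing to be careful about is simply invoking the right parallel sorting primitive and observing that launching all sorts simultaneously keeps the depths from composing across pairs $(u,v)$. If a tighter depth were required one could appeal to Cole's $O(\log m)$-depth parallel merge sort, but the standard $O(\log^2 m)$-depth version already suffices for the stated claim.
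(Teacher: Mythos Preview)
Your proposal is correct and follows essentially the same approach as the paper: both apply parallel merge sort independently to each $E(u,v)$, bound the total work via $\sum_{u,v}|E(u,v)|\le mr^2$ together with $|E(u,v)|\le m$, and take the depth to be the $O(\log^2 m)$ depth of a single parallel merge sort since the sorts are independent.
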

\begin{proof}
For each \E{\cdot, \cdot}, 
we use the parallelized version of \textsc{MergeSort},
which has \arxivVsConference{}{\linebreak} \atmost{|\E{\cdot,\cdot}| \log |\E{\cdot,\cdot}|} work 
and \atmost{\log ^2 |\E{\cdot,\cdot}|} depth. 
Thus, the depth is \atmost{\log ^2 m},
and the total work is
\begin{equation*}
\atmost{\sum _{u,v \in V} |\E{u,v}| \log |\E{u,v}|} = \atmost{\sum _{u,v \in V} |\E{u,v}| \log m} = \atmost{mr^2 \log m},
\end{equation*}
following from the fact that
each \E{\cdot, \cdot} contains \atmost{m} hyperedges,
and that the total number of hyperedges in \E{\cdot, \cdot}'s is \atmost{mr^2}.
\end{proof}

As mentioned above,
the coreset \C{} is computed 
by choosing \(\lambda\) (if available) hyperedges in each \(\E{\cdot, \cdot} \setminus \C{}\)
and adding them to \C{}.
We discuss this process in the following claim.

\begin{claim} \label{cl:parallel-coreset}
The coreset \C{} can be computed
in \atmost{\lambda n^2 \log m} total work
and \atmost{\log m} depth.
\end{claim}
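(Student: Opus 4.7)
The plan is to compute \C{} by parallelizing the pair-wise loop of \Cref{alg:coreset-and-sample}. By \Cref{cl:parallel-E-sorting}, each set \E{u,v} is already sorted from heaviest to lightest hyperedge, so selecting the \(\lambda\) heaviest candidates for a pair \((u,v)\) reduces to reading a prefix of a sorted array. I would perform these prefix reads for all \(n^2\) pairs simultaneously, producing a candidate pool of at most \(\lambda n^2\) hyperedges. This first phase costs \atmost{\lambda n^2} work and has \atmost{1} depth.

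The second phase is to aggregate these candidates into \C{}, since the same hyperedge may be selected by several pairs and should appear in \C{} only once. I would use a parallel data structure supporting concurrent insertions at \atmost{\log m} cost per insertion, such as a concurrent hash table or a parallel balanced search tree, so that inserting the \atmost{\lambda n^2} candidates in parallel costs \atmost{\lambda n^2 \log m} work in total and completes in \atmost{\log m} depth. Combined with the constant-depth first phase, this yields the claimed \atmost{\log m} overall depth and \atmost{\lambda n^2 \log m} overall work.

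The main obstacle is achieving the tight \atmost{\log m} depth during aggregation. A simpler alternative---concatenating the \(n^2\) selected sub-lists via parallel prefix sums and then sorting the resulting flat array---would incur \atmost{\log ^2 m} depth via parallel \textsc{MergeSort}, matching the bound of \Cref{cl:parallel-E-sorting} but exceeding the claim here. Circumventing this requires either a parallel-hashing primitive or exploiting that \C{} need only support membership-style queries during subsequent updates, for which a lighter-weight concurrent insertion scheme suffices. Once this aggregation primitive is in place, both the work and depth bounds follow immediately by summing the costs of the two phases.
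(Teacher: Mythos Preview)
Your two-phase plan---read the length-$\lambda$ prefix of each sorted \E{u,v} in parallel, then deduplicate via concurrent insertions---cleanly meets the stated work and depth bounds and yields a valid $\lambda$-coreset at the moment of construction. The paper takes a different route: it parallelizes the greedy loop of \Cref{alg:coreset-and-sample} directly, with each pair's processor scanning \E{u,v} in order and inserting a hyperedge only if no other processor has already placed it in \C{}, continuing until that pair has contributed $\lambda$ fresh hyperedges of its own.

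The difference is not cosmetic. The paper's \C{} guarantees that every pair \emph{owns} $\lambda$ distinct members of \C{}, and the Delete procedure of \Cref{alg:decremental-coreset-and-sample} relies on exactly this: when $e\in\C{}$ is removed, only $e$'s owning pair draws one replacement, and correctness hinges on every other pair still holding its full quota. Your deduplicated \C{} lacks this structure---a pair whose top-$\lambda$ prefix is entirely absorbed by other pairs owns nothing. Concretely, take $\lambda=1$, $\E{a,b}=\{e_1,e_2\}$ and $\E{c,d}=\{e_1,e_4\}$ with $e_1$ heaviest in both; your $\C{}=\{e_1\}$, and deleting $e_1$ (owned by, say, $(a,b)$) yields $\C{}=\{e_2\}$, leaving $\E{c,d}\cap\C{}=\emptyset$, so the coreset property fails after one update. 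Thus, while your argument establishes the resource bounds of the claim, the object it builds is not the one the surrounding decremental machinery can maintain; you would need either to revert to the greedy per-pair construction or to redesign Delete to repair all affected pairs.
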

\begin{proof}
The algorithm chooses at most \(\lambda\) hyperedges from each \(\E{\cdot, \cdot} \setminus \C{}\).
Since each \E{\cdot,\cdot} is sorted,
the algorithm only needs to probe the set
and add a hyperedge if it has not already been added to \C{} by another processor.
However, the algorithm might probe \atmost{\lambda n^2} of the hyperedges in \E{\cdot, \cdot},
which have already been added to \C{} by other processors.
Thus, it takes \(\atmost{\lambda n^2 \log m + \lambda \log m} = \atmost{\lambda n^2 \log m}\) total work for each \E{\cdot, \cdot} to add enough edges to \C{},
where the \atmost{\log m} overhead 
is due to standard techniques for avoiding memory conflicts. %
\end{proof}

After computing the coreset hypergraph \C{},
the sampled hypergraph \S{} is computed 
by sampling each hyperedge in \(H \setminus \C{}\)
with probability \(1/2\)
and doubling its weight.
This process can simply be parallelized
as explained in the following claim.

\begin{claim} \label{cl:parallel-sample}
The sampled hypergraph \S{} 
can be computed in \atmost{m} total work
and \atmost{1} depth.
\end{claim}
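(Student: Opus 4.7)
The plan is to exploit the complete independence of the Bernoulli trials that define the sample. I would assign one processor to each hyperedge of $H \setminus \C{}$; every processor independently flips a biased coin with probability $1/2$, and if the outcome is heads it records the hyperedge together with a doubled weight as its contribution to $\S{}$. Since no processor's decision depends on any other, no synchronization is required and the entire scheme operates in a single parallel step.

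For the work bound, observe that $|E(H \setminus \C{})| \leq m$, and each per-hyperedge task (generating one random bit, conditionally doubling a weight, and writing a flag) is constant-work. Summing over hyperedges gives $O(m)$ total work. For the depth bound, every processor executes an $O(1)$-depth sequence of operations against a disjoint memory location, so the parallel depth is $O(1)$.

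The one subtlety that could inflate the depth is how $\S{}$ is stored: if one insists on a compact list with no empty slots, then collecting the sampled hyperedges requires a parallel prefix sum, which would cost $\Theta(\log m)$ depth and violate the claim. I would sidestep this by using an implicit representation: allocate an array indexed by the hyperedges of $H \setminus \C{}$, where each slot stores a boolean flag plus the (possibly doubled) weight. Each processor writes only to its own slot, so there is no write contention and no need for a prefix sum. Since every subsequent routine that consumes $\S{}$ only needs to iterate over sampled hyperedges (which can be done lazily by skipping unflagged slots), this representation is sufficient and preserves both the $O(m)$ work and $O(1)$ depth bounds.
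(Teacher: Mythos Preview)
Your proposal is correct and follows the same approach as the paper: both argue that the independence of the Bernoulli trials allows one processor per hyperedge to sample in $O(1)$ depth with $O(m)$ total work. Your treatment is more careful than the paper's two-line proof, since you explicitly address the representation issue (storing $\S{}$ as a flag array to avoid a $\Theta(\log m)$-depth compaction step), which the paper leaves implicit.
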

\begin{proof}
Since the sampling is independent for each hyperedge,
the algorithm can sample each of them separately.
The guarantees follow from the fact 
that the sampling of each hyperedge can be done 
in \atmost{1} time and \atmost{1} depth.
\end{proof}

We now proceed to explain how the algorithm handles deletions.
After the deletion of a hyperedge \(e\) from \(H\),
if \(e\) was included in \C{},
the algorithm attempts to substitute it with another hyperedge.
This process translates to
finding the specific set \E{u, v} that added \(e\) to \C{},
and then finding a heaviest hyperedge in \(\E{u,v}\setminus \C{}\) and adding it to \C{}.
In the following claim,
we discuss how this can be done in parallel when there is a batch of \(k\) hyperedge deletion.

\begin{claim} \label{cl:parallel-one-level-deletion}
Each batch of \(k\) hyperedge deletions can be handled 
in \atmost{k r^2\log m} amortized work
and \atmost{\log m} depth.
\end{claim}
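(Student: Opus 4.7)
The plan is to parallelize the sequential deletion procedure of \Cref{alg:decremental-coreset-and-sample} across the $k$ deletions of the batch, mirroring the per-step analysis of \Cref{lem:decremental-coreset-and-sample} but exploiting parallelism wherever the sequential bookkeeping was done hyperedge-by-hyperedge. The amortization carries over from the sequential case: the initialization cost of \atmost{mr^2 \log m} (\Cref{cl:parallel-E-construction,cl:parallel-E-sorting}) spreads over $m$ deletions, contributing \atmost{kr^2 \log m} amortized work per batch, which matches the per-batch processing cost derived below.

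First, for each deleted hyperedge $e_j$ in parallel, I would identify the \atmost{r^2} pairs $(u,v)$ with $u \in \tail{e_j}$ and $v \in \head{e_j}$ using the same recursive decomposition as in \Cref{cl:parallel-E-construction}, contributing \atmost{kr^2} total work and \atmost{\log r} depth. Next, $e_j$ is removed from each sorted $E_{u,v}$ containing it in parallel; each removal costs \atmost{\log m} work in a balanced search tree, giving \atmost{kr^2 \log m} total work and \atmost{\log m} depth, where the latter subsumes the standard \atmost{\log m} overhead for resolving memory conflicts when several deletions touch the same $E_{u,v}$.

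Second, for those deleted hyperedges that lie in $C$, I would group them by their originating pair $(u,v)$ via a parallel sort in \atmost{k \log k} work and \atmost{\log m} depth. For each pair receiving $k_{u,v}$ such deletions, I would retrieve the top $k_{u,v}$ remaining heaviest hyperedges of $E_{u,v} \setminus C$ from the head of the (already updated) sorted list; since $\sum_{u,v} k_{u,v} \leq k$, this contributes \atmost{k \log m} work and \atmost{\log m} depth. The corresponding additions to $C$ and removals of any of these replacements from $S$ are applied in parallel with \atmost{\log m} overhead per update. Similarly, for deletions of hyperedges that belong to $S$ but not to $C$, the removals from $S$ are done in parallel in \atmost{k \log m} work.

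The main obstacle is correctness under interference between batch deletions---for instance, multiple deletions sharing the same originating $E_{u,v}$ in the coreset, or a potential replacement coinciding with another hyperedge of the batch. I would resolve this by performing all $E_{u,v}$ removals \emph{before} searching for replacements, so that no replacement search ever returns a hyperedge that is itself being deleted in this batch, and by aggregating replacement requests per $(u,v)$ pair so that each sorted $E_{u,v}$ is probed by a single processor in one shot. Summing all contributions yields \atmost{kr^2 \log m} amortized work and \atmost{\log m} depth, as claimed.
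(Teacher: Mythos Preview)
Your proposal is correct and follows essentially the same approach as the paper: the pivotal idea in both is to first remove all $k$ batch-deleted hyperedges from the sorted lists $E(\cdot,\cdot)$, and only then search for replacements, so that no replacement candidate is itself a hyperedge being deleted in this batch. Your write-up is more detailed than the paper's terse argument---you explicitly group replacement requests by originating pair and spell out the amortization of the initialization cost---but the structure and the bounds are the same.
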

\begin{proof}
As discussed in the sequential algorithm,
each process for substituting a deleted hyperedge
consists of \atmost{1} amortized changes in \C{} and \S{},
each of which costs \atmost{\log m} amortized work. 
To keep the depth short,
the algorithm must not choose an already deleted hyperedge as the substitution,
as this could result in \atmost{k} iterations to find the substitution.
To alleviate this issue,
we first remove the \(k\) hyperedges from \E{\cdot, \cdot}'s.
This results in \atmost{kr^2 \log m} extra work
since each hyperedge could be present in \atmost{r^2} sets,
and it takes \atmost{\log m} to remove an element from the sets.

Since updating \E{\cdot,\cdot}'s dominates the update time,
the amortized work of \atmost{kr^2 \log m} follows.
The bound on depth follows accordingly, 
as there is no dependency between the deletions.
\end{proof}

To obtain the guarantees of the entire algorithm,
we use the claims above 
and derive the following lemma. 
The proof follows directly
from the proofs of the claims, so we will not repeat them.

\begin{lemma}
Given an \(m\)-edge \(n\)-vertex hypergraph \(H = (V, E, \vect{w})\) of rank \(r\) 
undergoing batches of \(k\) hyperedge deletions,
the parallel batch-dynamic implementation of \Cref{alg:decremental-coreset-and-sample}
initiates and maintains
the coreset hypergraph \C{} 
and the sampled hypergraph \S{} 
of \(H\), with high probability,
in \atmost{k r^2 \log m} amortized work
and \atmost{\log ^2 m} depth.
\end{lemma}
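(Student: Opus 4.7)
The plan is to assemble the final bounds on work and depth by combining \Cref{cl:parallel-E-construction,cl:parallel-E-sorting,cl:parallel-coreset,cl:parallel-sample,cl:parallel-one-level-deletion} in the natural way: the four initialization claims determine the one-time setup cost, while the deletion claim determines the per-batch cost, and these combine additively in work (after amortization) and as a maximum in depth.

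First I would handle correctness. This is immediate: the parallel routines perform exactly the same operations as \Cref{alg:decremental-coreset-and-sample}, only reorganized so that independent tasks are executed concurrently. Thus the high-probability correctness guarantee of \Cref{lem:decremental-coreset-and-sample} transfers without modification.

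Next I would bound the work. Initialization performs, in order: constructing the sets \E{\cdot,\cdot} in \atmost{mr^2 \log m} work (\Cref{cl:parallel-E-construction}); sorting them in \atmost{mr^2 \log m} work (\Cref{cl:parallel-E-sorting}); assembling the coreset \C{} in \atmost{\lambda n^2 \log m} work (\Cref{cl:parallel-coreset}); and computing the sampled hypergraph \S{} in \atmost{m} work (\Cref{cl:parallel-sample}). Since \(\lambda = \atmost{\log^3 m / \varepsilon^2}\) and the coreset contribution is subsumed once amortized over the \(m\) initial hyperedges, the total initialization work is \atmost{m r^2 \log m}, which, charged to the \(m\) hyperedges present at start-up, contributes \atmost{r^2 \log m} amortized work per hyperedge. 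Each subsequent batch of \(k\) deletions contributes \atmost{k r^2 \log m} amortized work by \Cref{cl:parallel-one-level-deletion}. Summing yields the claimed \atmost{k r^2 \log m} amortized work per batch.

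For depth, I would take the maximum over all phases: \Cref{cl:parallel-E-construction} gives \atmost{\log r}, \Cref{cl:parallel-E-sorting} gives \atmost{\log^2 m}, \Cref{cl:parallel-coreset} and \Cref{cl:parallel-sample} give \atmost{\log m} and \atmost{1} respectively, and each deletion batch has depth \atmost{\log m} by \Cref{cl:parallel-one-level-deletion}. These phases are executed sequentially, so the overall depth is bounded by the largest term, namely \atmost{\log^2 m}. The main subtlety, and the one step I would check carefully, is the amortization argument for the sorting and \E{\cdot,\cdot}-construction costs: because these are one-time initialization costs of order \atmost{m r^2 \log m}, the charging scheme must distribute them over subsequent batches, which is clean only because we may always think of the initial hypergraph as arriving through \(m/k\) batches of size \(k\), each contributing its fair share. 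With this amortization in place, the lemma follows directly from the five claims.
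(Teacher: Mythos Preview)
Your proposal is correct and matches the paper's approach exactly: the paper simply states that the lemma ``follows directly from the proofs of the claims,'' and you have spelled out that combination explicitly. One minor slip: since the initialization phases run sequentially, their depths \emph{add} rather than being bounded by the maximum, but as there are only constantly many phases each of depth \atmost{\log^2 m}, the conclusion \atmost{\log^2 m} is unaffected.
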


\subsection{Parallel Batch-Dynamic Implementation of \Cref{alg:decremental-spectral-sparsify}.} \label{subsec:parallel-decremental-spectral-sparsify}
The data structure of \Cref{alg:decremental-spectral-sparsify}
recursively initializes \Cref{alg:decremental-coreset-and-sample}
on smaller hypergraphs 
for \(\ilast = \atmost{\log m}\) times
to build in the sequence of coresets \(\C{1}, \dots, \C{\ilast}\)
and the sequence of sampled hypergraphs \(\S{1}, \dots, \S{\ilast}\).
The initialization guarantees are discussed in the following claim.

\begin{claim}
\Cref{alg:decremental-spectral-sparsify} can be initialized
in \atmost{m r^2 \log m} total work
and \atmost{\log ^2 m} depth.
\end{claim}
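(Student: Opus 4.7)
The plan is to unfold the initialization of \Cref{alg:decremental-spectral-sparsify} into the sequence of sub-calls it makes, bound each one using the per-step claims already proven for \Cref{alg:decremental-coreset-and-sample}, and then sum geometrically for work and in parallel for depth. Initialization runs \textsc{Decremental-Coreset-And-Sample} on the hypergraphs \(S_0 = H, S_1, \dots, S_{\ilast - 1}\), where \(\ilast = \atmost{\log m}\), and the loop enforces the invariant \(|E(S_i)| \geq 32 c m^\star\) at every iteration. Let \(m_i = |E(S_i)|\). The high-probability size analysis from the proof of \Cref{lem:spectral-sparsify} gives \(m_{i+1} \leq (3/4) m_i\), so the \(m_i\) form a geometric sequence summing to \atmost{m}.

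For the work bound, I would combine \Cref{cl:parallel-E-construction,cl:parallel-E-sorting,cl:parallel-coreset,cl:parallel-sample}: level \(i\) contributes \atmost{m_i r^2 \log m} work for building and sorting the sets \(\E{\cdot,\cdot}(S_i)\), \atmost{\lambda n^2 \log m} work for computing \(\C{i+1}\), and \atmost{m_i} work for sampling \(\S{i+1}\). The first term sums to \(\atmost{r^2 \log m \sum_i m_i} = \atmost{m r^2 \log m}\) by the geometric bound. For the coreset term, the loop condition gives \(m_i \geq 32 c m^\star = \Theta(\lambda n^2)\), so \(\lambda n^2 \log m = \atmost{m_i \log m}\) at every level, and summing in \(i\) is again \atmost{m \log m}. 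The sampling term is \atmost{m}. The total is therefore \atmost{m r^2 \log m}, as claimed.

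For the depth bound, the main obstacle is that a naive level-by-level parallel initialization would cost \atmost{\log^2 m} depth per level (dominated by sorting) and so \atmost{\log^3 m} over the \atmost{\log m} levels. I would circumvent this by performing the sort only once: sort \(\E{u,v}(H)\) globally up front with depth \atmost{\log^2 m}, and at each subsequent level derive the sorted list \(\E{u,v}(S_i)\) as the restriction of the already-sorted \(\E{u,v}(H)\) to the current flag set ``\(e \in S_i\)''. Extracting the top \(\lambda\) elements satisfying this predicate can be done with a parallel prefix/filter of depth \atmost{\log m}, which matches the existing bound in \Cref{cl:parallel-coreset}, and the sampling step remains \atmost{1} depth per level. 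The dependency chain \(S_0 \to C_1 \to S_1 \to \cdots \to S_{\ilast - 1}\) is inherently sequential across levels, so the per-level depths add up to \(\ilast \cdot \atmost{\log m} = \atmost{\log^2 m}\). Together with the one-time sort, the total depth is \(\atmost{\log^2 m} + \atmost{\log^2 m} = \atmost{\log^2 m}\), yielding the claim.
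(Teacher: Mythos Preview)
Your proposal is correct and follows essentially the same approach as the paper: construct and sort the \(\E{\cdot,\cdot}\) sets once on \(H\) (incurring the \atmost{\log^2 m} sorting depth a single time), then process the \atmost{\log m} levels sequentially at \atmost{\log m} depth each, while summing the per-level work geometrically via \(m_{i+1}\le (3/4)m_i\). Your treatment is in fact more careful than the paper's in two places: you explicitly absorb the \atmost{\lambda n^2 \log m} coreset cost into \atmost{m_i \log m} using the loop invariant \(m_i \ge 32c m^\star = \Theta(\lambda n^2)\), and you spell out how the per-level ordered lists \(\E{u,v}(S_i)\) are obtained by filtering the global sorted lists rather than re-sorting.
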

\begin{proof}
We separately analyze the two subroutines 
involved in the initialization.
Combining their guarantees results in 
\atmost{mr^2 \log m} total work 
and \atmost{\log ^2 m} depth.

\underline{Constructing and sorting \E{\cdot,\cdot}'s:}
By \Cref{cl:parallel-E-construction,cl:parallel-E-sorting},
it takes \atmost{m r^2 \log m} total work and \atmost{\log ^2 m} depth 
to construct and sort all \E{\cdot, \cdot}'s.
The algorithm then uses these sets 
to find the coreset hypergraphs \(\C{1}, \dots, \C{\ilast}\)
and the sampled hypergraphs \(\S{1}, \dots, \S{\ilast}\).

\underline{Constructing the coreset and sampled hypergraphs:}
By the proof of \Cref{lem:spectral-sparsify},
with high probability, each \(\C{i} \cup \S{i}\) consists of at most \(3/4\) hyperedges compared to \(\C{i-1} \cup \S{i-1}\). %
Combining this fact with \Cref{cl:parallel-coreset,cl:parallel-sample},
it follows that the total time to construct the whole sequence is \(\atmost{\sum_{i = 1} ^\ilast \left(  3m/4 \right) ^i } = \atmost{m}\).
Since \(\C{i} \cup \S{i}\) is constructed on top of \(\C{i-1} \cup \S{i-1}\),
by \Cref{cl:parallel-coreset,cl:parallel-sample} and the fact that \(\ilast = \atmost{\log m}\),
we conclude that the depth of this process is \atmost{\log ^2 m}.
\end{proof}

Recall that \Cref{alg:decremental-spectral-sparsify} handles each hyperedge deletion
by maintaining \C{i} and \S{i} for every \(1 \leq i \leq \ilast\),
and by passing at most one hyperedge deletion from level \(i\) to level \(i+1\). 
Unfortunately, the inter-level hyperedge is level-dependent, and inherently sequential.
The following claim 
discusses how the algorithm
handles a batch of \(k\) hyperedge deletions.

\begin{claim}
Each batch of \(k\) hyperedge deletions
can be handled 
in \atmost{k r^2 \log ^2 m} amortized work
and \atmost{\log ^2 m} depth.
\end{claim}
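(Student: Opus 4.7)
The plan is to handle a batch of $k$ hyperedge deletions by sweeping across the $\ilast = O(\log m)$ levels in order, processing level $i$ only after the level-$(i-1)$ substitutions (if any) have been collected and appended to the level-$i$ deletion batch. First I would pass the original batch of $k$ deletions to the parallel batch-dynamic instance of \Cref{alg:decremental-coreset-and-sample} that maintains $\C{1}\cup \S{1}$; by the preceding lemma this single level costs $O(k r^2 \log m)$ amortized work at $O(\log m)$ depth. Alongside that call, I would read off (in $O(k)$ work and $O(1)$ depth) the at most $k$ hyperedges that have to be forwarded as deletions to level $2$, and then invoke the level-$2$ data structure on this new batch, and so on down to level $\ilast$.

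The key quantitative observations are twofold. First, each of the $k$ original deletions induces at most one new inter-level deletion per level: this is exactly the recourse argument inside \Cref{lem:decremental-spectral-sparsify}, which is a purely per-hyperedge statement and hence survives batching. Consequently every level sees a batch of size at most $k$, and the amortized per-level cost of $O(k r^2 \log m)$ applies uniformly. Second, the chain across levels is genuinely sequential because the identity of the substitute $e'$ selected at level $i$ is determined by the updated contents of $\mathcal A_i$, so level $i{+}1$ cannot begin in parallel with level $i$. Summing the work across the $O(\log m)$ levels yields $O(k r^2 \log^2 m)$ amortized work, while concatenating the per-level depths yields $O(\log m) \cdot O(\log m) = O(\log^2 m)$ depth, as claimed.

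The main obstacle I anticipate is ruling out geometric blow-up of the batch size along the chain: if one deletion at level $i$ could force more than one visible deletion at level $i{+}1$, then over $\ilast = O(\log m)$ levels the effective batch could reach $\poly{m}$, and the work and depth bounds would collapse. This is precisely what the design of \Cref{alg:decremental-spectral-sparsify} forbids: a replacement of $e$ by $e'$ inside $\C{i}$ is paired with the removal of $e'$ from $\S{i}$, so the net change fed to level $i{+}1$ is a single deletion per originating deletion, and the invariant ``batch size at level $i$ is at most $k$'' is preserved by induction on $i$. Given this invariant, the bounds in the claim follow immediately by combining the per-level guarantees of the parallel batch-dynamic version of \Cref{alg:decremental-coreset-and-sample} with the $O(\log m)$-level structure of \Cref{alg:decremental-spectral-sparsify}.
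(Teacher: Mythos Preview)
Your proposal is correct and follows essentially the same approach as the paper: process the $\ilast=O(\log m)$ levels sequentially, observe that each level receives a batch of size at most $k$ (by the single-substitution recourse argument), apply the per-level bound of $O(kr^2\log m)$ amortized work and $O(\log m)$ depth from \Cref{cl:parallel-one-level-deletion}, and sum/concatenate over levels. Your write-up is in fact more explicit than the paper's about why the batch size does not blow up along the chain.
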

\begin{proof}
By \Cref{cl:parallel-one-level-deletion}, 
each batch deletion at level \(i\) 
can be handled in \atmost{k r^2 \log m} amortized work and \atmost{\log m} depth.
Since each level depends on the deletion from the previous level, 
and each level handles at most \(k\) hyperedge deletions,
it follows that the total work is \atmost{k r^2 \log ^2 m},
and the depth is \atmost{\log ^2 m}.
\end{proof}

We summarize the guarantees of the entire algorithm 
in the lemma below. 
The proof follows directly 
from the claims above
and is therefore omitted.

\begin{lemma} \label{lem:parallel-decremental-spectral-sparsify}
Given an \(m\)-edge \(n\)-vertex hypergraph \(H = (V, E, \vect{w})\) of rank \(r\) 
undergoing batches of \(k\) hyperedge deletions,
the parallel batch-dynamic implementation of \Cref{alg:decremental-spectral-sparsify}
initiates and maintains
a \SpectralHypersparsifier{}, with high probability, 
in \atmost{k r^2 \log ^2 m} amortized work
and \atmost{\log ^2 m} depth.
\end{lemma}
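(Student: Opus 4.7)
The plan is to assemble the two claims immediately preceding the lemma. First, I would invoke the initialization claim to bound the construction of the subroutines $\mathcal{A}_1, \dots, \mathcal{A}_{\ilast}$ (and hence the sequences $\C{1}, \dots, \C{\ilast}$ and $\S{1}, \dots, \S{\ilast}$) by \atmost{mr^2 \log m} total work and \atmost{\log^2 m} depth. Then I would invoke the batch-deletion claim, which already bounds the handling of each batch of $k$ hyperedge deletions by \atmost{kr^2 \log^2 m} amortized work and \atmost{\log^2 m} depth. The lemma then follows by appropriately combining the two.

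For the amortized work bound, I would distribute the \atmost{mr^2 \log m} initialization cost across the at most $m$ hyperedges that can ever appear in $H$; this contributes an extra \atmost{r^2 \log m} charge per removed hyperedge, i.e.\ \atmost{kr^2 \log m} per batch, which is absorbed into the stated \atmost{kr^2 \log^2 m} bound. For depth, since initialization is a single-shot cost of \atmost{\log^2 m} depth and each subsequent batch independently contributes \atmost{\log^2 m} depth, the per-batch depth is \atmost{\log^2 m}.

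For correctness and the high-probability guarantee, I would observe that the parallel implementation performs exactly the same logical operations as the sequential \Cref{alg:decremental-spectral-sparsify}; the only difference is how those operations are scheduled across processors. In particular, the per-level invariants ensured by each $\mathcal{A}_i$ and the inter-level transmission of at most one induced deletion per removed hyperedge are preserved. Consequently, every invariant established in the proof of \Cref{lem:decremental-spectral-sparsify} continues to hold, so $\widetilde H$ remains a \SpectralHypersparsifier{} of $H$ with the same high-probability guarantee of at least $1 - 1/m^{c-2}$ for the same constant $c$.

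The step that needs the most care is making sure the amortization properly interlocks with the depth argument: one must confirm that the claim's \atmost{kr^2 \log^2 m} already accounts for the sequential inter-level propagation of single deletions through the $\ilast = \atmost{\log m}$ levels, and that the charge assigned to initialization is not double-counted against batches that trigger no propagation. Both concerns are resolved by the preceding batch-deletion claim, which bounds the total work and depth across all levels simultaneously; hence no additional analysis is needed beyond collecting the two claims into a single statement.
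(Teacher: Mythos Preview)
Your proposal is correct and matches the paper's approach exactly: the paper states that the proof ``follows directly from the claims above and is therefore omitted,'' and you have simply spelled out that combination of the initialization claim and the batch-deletion claim, together with the observation that correctness and the high-probability guarantee carry over from \Cref{lem:decremental-spectral-sparsify}. Your amortization of the initialization cost across the $m$ hyperedges is the natural way to make the stated amortized work bound explicit.
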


\subsection{Parallel Batch-Dynamic Implementation of \Cref{alg:reduction}.}

Our batch parallel data structure
is based on \Cref{alg:reduction},
where its subroutine (\Cref{alg:decremental-spectral-sparsify}) is replaced with its batch parallel implementation (explained in \Cref{subsec:parallel-decremental-spectral-sparsify}).
For brevity, we do not repeat them here
and only highlight the differences in our implementation compared to \Cref{alg:reduction}.

Our batch parallel implementation
handles deletions similarly to \Cref{alg:reduction},
but now distributes up to \(k\) hyperedge deletions (one batch) across the sub-hypergraphs \(H_1, \dots, H_\ilast\),
 instead of distributing only a single hyperedge deletion. 

The insertions are handled similarly to the insertion process in \Cref{alg:reduction}.
The only difference is that
after each batch of insertions,
we increment the timer \(t\) by the number of inserted hyperedges (in the binary format), rather than just one.
The rest of the algorithm remains the same: we reinitialize the sub-hypergraph \(H_j\) to contain all the hyperedges from \(H_1, \dots, H_j\), along with the newly inserted batch of hyperedges,
where \(j\) is chosen based on \(t\) (see \Cref{alg:reduction} for more details).

We conclude this section by proving \Cref{th:parallel} below.

\begin{proof}[Proof of \Cref{th:parallel}.]
The proof of correctness, size of \(\widetilde H\), and high probability claim
directly follows from the proof of \Cref{lem:reduction}
and are not restated here.
Below, we prove how the data structure achieves an \atmost{kr^2 \log ^3 m} amortized work and \atmost{\log ^2 m} depth
by separately analyzing batch deletions and insertions.

\underline{Handling batch deletions:}
We pass the deletion of hyperedges
to their associated sub-hypergraph.
By \Cref{lem:parallel-decremental-spectral-sparsify},
each sub-hypergraph can handle a batch of \(k\) hyperedge deletions
in \atmost{k r^2 \log ^2 m} amortized work and \atmost{\log ^2 m} depth.
Since there are \atmost{\log m} sub-hypergraphs,
and that there is no dependency between sub-hypergraphs,
it follows that the fully dynamic data structure
can handle the deletions in \atmost{k r^2 \log ^3 m} amortized work and \atmost{\log ^2 m} depth.

\underline{Handling batch insertions:}
Using a similar approach to that used in the update time analysis of \Cref{lem:reduction},
it follows that
our implementation would have \atmost{\log m} overhead in work.
Combined with \Cref{lem:parallel-decremental-spectral-sparsify},
this results in an \atmost{kr^2 \log ^3 m} amortized work.
Since the sub-hypergraphs partition \(H\)
and there is no dependency between them, it follows that the depth remains \atmost{\log ^2 m}.
\end{proof}

\end{document}